\newcommand{\id}{\openone}
\newcommand{\tr}{{\rm tr}}
\newcommand{\ket}[1]{\left|{#1}\right\rangle}
\newcommand{\bra}[1]{\left\langle{#1}\right|}
\newcommand{\braket}[2]{\langle{#1}|{#2}\rangle}
\newcommand{\ketbrad}[1]{\left|{#1}\rangle\!\langle{#1}\right|}
\newcommand{\ketbra}[2]{\left|{#1}\rangle\!\langle{#2}\right|}
\newcommand{\mean}[1]{\langle{#1}\rangle}
\newcommand{\ex}[1]{{\rm e}^{#1}}
\renewcommand{\P}{p}
\newcommand{\Q}{q}
\renewcommand{\vec}[1]{\mathbf{#1}}
\newcommand{\be}{\begin{equation}} 							
\newcommand{\ee}{\end{equation}}
\newcommand{\ba}{\begin{align}}
\newcommand{\ea}{\end{align}}
\newcommand{\bematrix}{\left(\begin{matrix}}
\newcommand{\ematrix}{\end{matrix}\right)}
\DeclareMathOperator{\Tr}{Tr}
\newtheorem{theorem}{Theorem}
\newtheorem{lemma}{Lemma}
\begin{document}

\title{Quantum Sequential Hypothesis Testing}
\date{\today}
\author{Esteban Mart\'inez Vargas}
\email{Esteban.Martinez@uab.cat}
\affiliation{F\'isica Te\`orica: Informaci\'o i Fen\`omens Qu\`antics, Departament de F\'isica, Universitat Aut\`onoma de Barcelona, 08193 Bellatera (Barcelona) Spain}
\author{Christoph Hirche}
\email{christoph.hirche@gmail.com}
\affiliation{QMATH, Department of Mathematical Sciences, University of Copenhagen, Universitetsparken 5, 2100 Copenhagen, Denmark}
\author{Gael Sent\'is}
\email{Gael.Sentis@uab.cat}
\affiliation{F\'isica Te\`orica: Informaci\'o i Fen\`omens Qu\`antics, Departament de F\'isica, Universitat Aut\`onoma de Barcelona, 08193 Bellatera (Barcelona) Spain}
\author{Michalis Skotiniotis}
\email{michail.skoteiniotis@uab.cat}
\affiliation{F\'isica Te\`orica: Informaci\'o i Fen\`omens Qu\`antics, Departament de F\'isica, Universitat Aut\`onoma de Barcelona, 08193 Bellatera (Barcelona) Spain}
\author{Marta Carrizo}
\affiliation{F\'isica Te\`orica: Informaci\'o i Fen\`omens Qu\`antics, Departament de F\'isica, Universitat Aut\`onoma de Barcelona, 08193 Bellatera (Barcelona) Spain}
\author{Ramon Mu\~noz-Tapia}
\email{Ramon.Munoz@uab.cat}
\affiliation{F\'isica Te\`orica: Informaci\'o i Fen\`omens Qu\`antics, Departament de F\'isica, Universitat Aut\`onoma de Barcelona, 08193 Bellatera (Barcelona) Spain}
\author{John Calsamiglia}
\email{John.Calsamiglia@uab.cat}
\affiliation{F\'isica Te\`orica: Informaci\'o i Fen\`omens Qu\`antics, Departament de F\'isica, Universitat Aut\`onoma de Barcelona, 08193 Bellatera (Barcelona) Spain}

\begin{abstract}
We introduce  sequential analysis in quantum information processing, 
by focusing  on the fundamental task of quantum hypothesis testing. 
In particular our goal is to discriminate between two arbitrary 
quantum states with a prescribed error threshold, $\epsilon$, when 
copies of the states can be required on demand.  We obtain 
ultimate lower bounds on the average number of copies needed to 
accomplish the task. We give a 
block-sampling strategy that allows to achieve the lower bound for 
some classes of states. The bound is optimal in both the symmetric as 
well as the asymmetric setting in the sense that it requires the 
least mean number of copies out of all other procedures, including 
the ones that fix the number of copies ahead of time. For qubit states we derive explicit 
expressions for the minimum average number of copies and show that 
a sequential strategy based on fixed local measurements outperforms 
the best collective measurement on a predetermined number of copies. 
Whereas for general states the  number of copies increases as 
$\log 1/\epsilon$,  for pure states sequential strategies 
require a finite average number of samples even in the case of 
perfect discrimination, i.e., $\epsilon=0$.
\end{abstract}

\maketitle

\paragraph{Introduction.} 
Statistical inference permeates almost every human endeavor, from science and engineering all the way through to economics, 
finance, and medicine. The perennial dictum in such inference tasks has been to optimize performance---often quantified by 
suitable cost functions---given a fixed number, $N$, of relevant resources~\cite{lehmann_2006, Helstrom1976}. 
This approach often entails the practical drawback that all $N$ resources need to be batch-processed before a good inference 
can be made. Fixing the number of resources ahead of time does not reflect the situation that one encounters in many real-life 
applications that might require an online, early, inference--such as change-point 
detection~\cite{Tartakovsky2014,Sentis2016,Sentis2017,Sentis2018}, or where additional data may be obtained on demand if 
the required performance thresholds are not met.

Sequential analysis~\cite{wald1973sequential} is a statistical 
inference framework designed to address these shortcomings. 
Resources are processed on-the-fly, and with each new measured 
unit a decision to stop the experiment is made depending on whether 
prescribed tolerable error rates (or other cost functions) are met; 
the processing is continued otherwise.  
Since the decision to stop is solely based on previous measurement 
outcomes, the size $N$ of the experiment is not predetermined but 
is, instead, a random variable. 
A sequential protocol is  deemed optimal if it requires the least 
\emph{average number of resources} among all statistical tests 
that guarantee the same performance thresholds. 
For many classical statistical inference tasks it is known that 
sequential methods can attain the required thresholds with 
substantially lower average number of samples  than any statistical 
test based on a predetermined number of samples \cite{wald1973sequential}. The ensuing 
savings in resources, and the  ability to take actions in real-time,  
have found applications in a wide range of fields~\cite{leung_lai_sequential_2001,Tartakovsky2014}. Extending  sequential analysis  to the quantum setting is of fundamental interest, and
with near-term quantum technologies on the verge of impacting the global market, the versatility and resource efficiency that sequential protocols  provide for quantum information processing is highly desirable.

In this paper we consider the discrimination of two arbitrary finite dimensional
quantum states~\cite{bae2015quantum}, $\rho$ (corresponding to the null 
hypothesis $H_{0}$) and $\sigma$ (corresponding to the alternative 
hypothesis $H_{1}$),  in a setting where a large number of copies 
can be used in order to meet a desired error threshold $\epsilon$. 
A first step in this direction was taken in Ref.~\cite{Slussarenko2017MinASN}, which considers the particular case where $\rho$ and $\sigma$ are pure states and restricts the analysis to specific local measurement strategies. 
Here, we address the problem in full generality, including arbitrary states, weak and collective measurements.
For collective strategies involving a large \emph{fixed} number of copies
the relation between this number and the error $\epsilon$ is 
$N\sim-\frac{1}{\xi}\log \epsilon$~\footnote{By 
    $f(\epsilon)\sim g(\epsilon)$ we mean asymptotic equivalence 
    $\lim_{\epsilon\to 0}f(\epsilon)/g(\epsilon)=1$}, where the rate $\xi$ 
depends on the pair of hypotheses and on the precise setting as we 
explain shortly. We show that one can significantly reduce the expected number of copies, $\mean{N}$, by considering sequential strategies where copies are provided on demand.  We give the ultimate lower-bounds as a single-letter 
expression of the form,
 \be
\label{eq:introlwb}
\mean{N}_{0}\geq-\frac{\log\epsilon}{D( \rho\|  \sigma)}+O(1)\;\mathrm{,}
\mean{N}_{1}\geq-\frac{\log\epsilon}{D( \sigma\|  \rho)}+O(1)
\ee
for $\epsilon\ll 1$, where $\mean{N}_{\nu}$ is the mean number of copies given the true hypothesis is 
$\nu\in\{0,1\}$ and $D(\rho\|\sigma)=\tr \rho(\log\rho-\log\sigma)$ is the quantum relative entropy.
In addition, we provide upper bounds which, for the worst case 
 $N_{\mathrm{wc}}=\max\{\mean{N}_{0},\mean{N}_{1}\}$, are achievable
 for some families of states. 


Specifically, we consider quantum hypothesis testing in a scenario where one can guarantee that for \emph{each} 
realization of the test  the conditional probability of 
correctly identifying each of the hypotheses is above a given 
threshold. This scenario, first introduced in~\cite{Slussarenko2017MinASN}, can be considered genuinely 
sequential since such \emph{strong error}  conditions cannot be 
generally met in a deterministic setting.  The proof method can be 
easily extended to the more common asymptotic  \emph{symmetric} 
and \emph{asymmetric} scenarios involving the usual \emph{type I} 
(or \emph{false positive}) and \emph{type II} (or \emph{false negative}) errors. 
We give the optimal scaling of the mean number of copies when 
the thresholds for
either one or both types of errors are asymptotically small.

Before proceeding, let us briefly review these 
fundamental hypothesis testing scenarios, which come about from 
the relative importance one places between type I error---the error of 
guessing the state to be $\sigma$ when the true state is $\rho$ 
whose probability we denote by $\alpha=P(\hat H_{1}|\rho)$---and  
type II error---the error of guessing $\rho$ when the state is 
$\sigma$ whose probability is $\beta=P(\hat H_{0}|\sigma)$. 
Often, the two types of errors are put on equal footing (\emph{symmetric} scenario) and one seeks to minimise the mean probability of error 
$\bar\epsilon=\eta_{0} \alpha+\eta_{1} \beta$ with $\eta_0,\eta_1=1-\eta_0$ the prior probabilities for each hypothesis. 
The mean error decays exponentially with the number of copies with an optimal rate given by the Chernoff 
distance~\cite{audenaert_discriminating_2007,nussbaum_chernoff_2009}, $\xi_{\rm Ch}=- \inf _{0 \leq s \leq 1} \log\tr(\rho^{1-s} 
\sigma^{s})$.  

Yet, there are asymmetric
 instances,
 e.g.,  in medical trials, 
where the effect of approving an ineffective 
treatment (type-II) is far worse than discarding a potentially 
good one (type-I).  In such cases it is imperative to minimise the type II error whilst maintaining
a finite probability of successfully identifying the null hypothesis, i.e., $p(\hat{H}_0|\rho)=1-\alpha\geq p_s>0$.    
The corresponding optimal error rate for quantum hypotheses is given by quantum Stein's 
lemma \cite{hiai_proper_1991,Nagaoka1999StrongConv}, 
$\beta\sim\ex{-\xi_{S} N}$ where $\xi_{S}=D(\rho||\sigma)$ is the 
quantum relative entropy. If, on the other hand, we require 
the type I error to decay exponentially, i.e., $\alpha\leq \ex{-r N}$ 
for some rate $r$, then the optimal rate is given by the quantum Hoeffding bound 
\cite{hayashi_error_2007,nagaoka_converse_2006}.  These optimal error rates for strategies with fixed number of copies
have found applications in quantum Shannon theory~\cite{wilde_quantum_2017}, quantum 
illumination~\cite{lloyd_enhanced_2008}, and provide operational meaning to abstract information 
measures~\cite{audenaert_asymptotic_2007,calsamiglia_quantum_2008,berta2017composite}.

What the above results also show is that for $N$ fixed there is a trade-off between 
the probabilities of committing either error.  The advantage of sequential analysis is that it provides strategies capable of 
minimising the average number of copies when \emph{both} errors are bounded, and yields higher asymptotic rates in each of the 
settings described above.

\paragraph{Fixed local measurements.}
We begin by considering the case when each quantum system is 
measured with the same measurement apparatus $E$, giving rise to identically distributed samples of a classical probability distribution. This strategy 
has the advantage of being easily implementable, and that it lets us  
introduce the classical sequential analysis framework. 
Specifically, the optimal classical sequential test, for both the strong 
error as well as the symmetric and asymmetric setting, is known to 
be the \emph{Sequential Probability Ratio Test}  
(SPRT)\cite{Wald1948OptimumChar} which we now review.

After $n$ measurements have been performed, we have a 
string of outcomes $\vec{x}_n=\{x_1,x_2,\ldots,x_n\}$, 
where each element has been sampled effectively from a probability 
distribution determined by the POVM  ${E}=\{E_{x}\}$ and the true 
state of the system, i.e., either ${\P}(x):=\tr(E_x\rho)$, or 
${\Q}(x):=\tr(E_x\sigma)$. For given error thresholds 
$\epsilon_{0}, \epsilon_{1}$, the strong condition demands that 
for each conclusive sequence  the conditional probabilities obey 
either
 \begin{align}
\label{eq:cond0}
P(\rho|\vec{x}_n) &= \frac{\eta_0 \P(\vec{x}_n)}{\eta_0 \P(\vec{x}_n)+\eta_1\Q(\vec{x}_n)} \geq 1-\epsilon_{0},\; {\rm or}  \\
 \label{eq:cond1}
P(\sigma|\vec{x}_n) &= \frac{\eta_1 \Q(\vec{x}_n)}{\eta_0 \P(\vec{x}_n)+\eta_1\Q(\vec{x}_n)} \geq 1-\epsilon_{1}
 \end{align}
where  $\P(\vec{x}_n)=\prod_{k=1}^n \P(x_k)$ since the copies are identical and independent (the same holds for $\Q$). If neither condition is met,  a new copy needs to be requested and we continue measuring. That is, starting at $n=1$ at every step $n$ we check whether
\begin{enumerate}
\item $P(\rho | \vec x_{n}) \geq 1-\epsilon_{0}$, then STOP and accept $H_0$, with guaranteed probability of success  $s_{0}=1-\epsilon_{0}$.
\item  $P(\sigma | \vec x_{n}) \geq 1-\epsilon_{1}$, then STOP and accept $H_1$, with guaranteed probability of success $s_{1}=1-\epsilon_{1}$.
\item If neither 1 nor 2 hold, continue sampling.
\end{enumerate}
Using \eqref{eq:cond0} and \eqref{eq:cond1}, the condition to continue sampling can be written in terms of a single sample statistic, the log-likelihood ratio 
\begin{align}
\label{eq:loglike}
Z_{n}=\log\frac{\Q(\vec{x}_n)}{\P(\vec{x}_n)}=\sum_{k=1}^{n} z_{k} \mbox{ with } z_{k}=\log\frac{\Q(x_{k})}{\P(x_{k})}
\end{align}
as  $b:=\log B\leq Z_{n}\leq \log A=:a$,  where $A=\frac{\eta_{0}}{\eta_{1}}\frac{1-\epsilon_{1}}{\epsilon_{1}} $, $B=\frac{\eta_{0}}{\eta_{1}}\frac{\epsilon_{0}}{1-\epsilon_{0}}$.

\begin{figure}[t]
\includegraphics[width=.8\columnwidth]{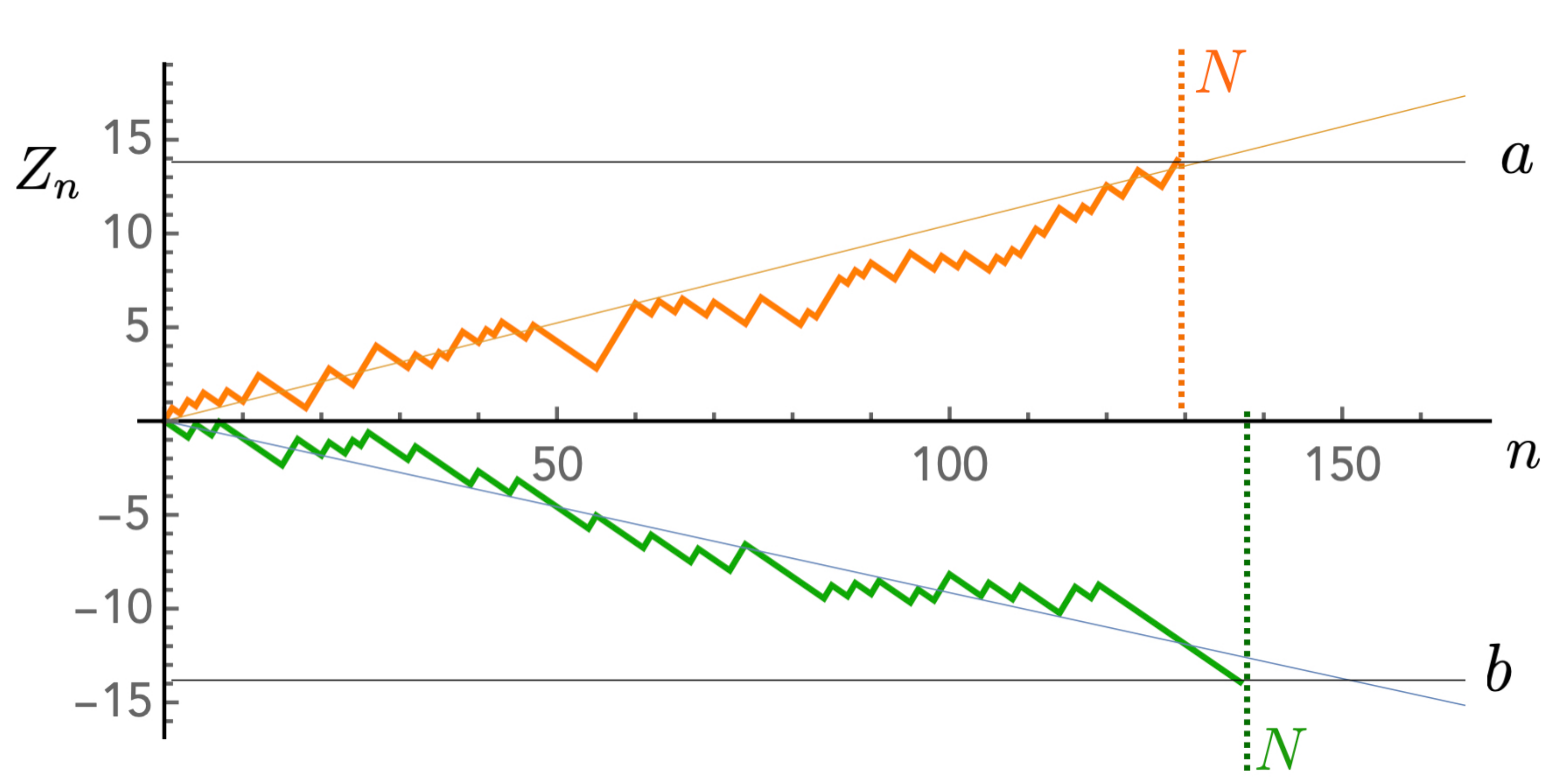}
\caption{Random walk describing the likelihood function $Z_{n}$ under $H_0$ (green) and $H_1$ (orange). When the value of $Z_n$ crosses $b\sim\log\epsilon_{0}$ ($a\sim\log\tfrac{1}{\epsilon_{1}}$) we decide in favour of $H_{0}$ ($H_{1}$). $N$ indicates the corresponding stopping time.}\label{fig:randomwalk}

\end{figure}

It is convenient to interpret $Z_{n}$  as a random walk (see Fig. \ref{fig:randomwalk}) that  at every instance performs a step of length $z_{k}$ with probability $p(x_{k})$, if $H_{0}$  holds, or with probability $q(x_{k})$, if $H_{1}$  holds. Under $H_{1}$ the mean position of the walker at step $n$ is given by $\mean{Z_{n}}_{1}=\sum_{k=1}^{n}\mean{z_{k}}_{1}=n \mean{z}_{1}=n D(q\| p)> 0$ where  $D( q\|  p)=\sum_{x}q(x)\log\frac{q(x)}{p(x)}$ is the relative entropy; while for $H_{0}$, \mbox{$\mean{Z_{n}}_{0}=-n D(p\|  q)<0$}. That is, under $H_{1}$ the walker has a drift towards  the positive axis, while under $H_{0}$ it drifts towards the negative axis. 
We define as the stopping time $N$ the first instance in which the walker steps out of the region $(a,b)$, i.e., $N:=\inf\{n : Z_{n}\notin (b,a)\}$, and note that it is a stochastic variable that \emph{only} depends on the current as well as the past measurement record.
The stochastic variable  $Z:=Z_{N}$ is the position of the walker at $N$. The mean value of this position can be related to the mean number of steps by Wald's identity~\cite{mitzenmacher2005probability},
\be\label{eq:Z1}
\mean{Z}_{1}=\mean{\sum_{k=1}^{N} z_{k}}_{1}=\mean{z}_{1}\mean{N}_{1}=D( q\|  p)\mean{N}_{1}
\ee
under hypothesis $H_{1}$, and likewise $\mean{Z}_{0}=-D( p\|  q)\mean{N}_{0}$.  
In order to estimate  $\mean{N}_{i}$ from \eqref{eq:Z1} we need to provide a good estimate for $\mean{Z}_{i}$. For this purpose let us first define $\mathcal{X}_{1}$ as the set of strings $\vec x$ such that $b<Z_{j}<a$ for all $j<n$  \emph{and}  $Z_{n}\geq a$, and $\mathcal{X}_{0}$ as the set of strings $\vec x$ such that $b<Z_{j}<a$ for all $j<n$  \emph{and}  $Z_{n}\leq b$.
 Then, the following relations hold:
\begin{align}\label{eq:alphaAbetaB}
\alpha=&P_{0}(Z\geq a)=\sum_{\vec x \in \mathcal{X}_{1}} p(\vec x) \leq 
\sum_{\vec x \in \mathcal{X}_{1}} \frac{q(\vec x)}{A}=\frac{1-\beta}{A}\\
\beta=&P_{1}(Z\leq b)=\sum_{\vec x \in \mathcal{X}_{0}}q(\vec x) \leq 
\sum_{\vec x \in \mathcal{X}_{0}} p(\vec x)B=(1-\alpha) B\nonumber
\end{align}
where in the first (second) inequality we used that  $\frac{q(\vec x)}{p(\vec x)}\geq A$ for strings in $\mathcal{X}_{1}$ ($\frac{q(\vec x)}{p(\vec x)}\leq B$ for strings in $\mathcal{X}_{2}$), and in the last equality we have used that $\lim_{n\to\infty}P(Z_{n}\in(b,a))=0$ \cite{Wald1948OptimumChar}, i.e. the walker eventually stops. The above equations are an instance of so-called Wald's likelihood ratio identity \cite{Lai_2004}.
We note that  the above inequalities can be taken to be approximate equalities if we assume that the process ends close to the prescribed boundary, i.e. there is no \emph{overshooting}. In particular, this will be valid in our asymptotically small error settings where the boundaries are far relative to  the (finite) step size $z_{k}$.
This allows us to establish a one-to-one correspondence between the thresholds $A,B$ and the type I \& II errors:
$\alpha \approx \frac{1 - B}{A - B}=
\frac{\epsilon_{1}(\eta_{1}-\epsilon_{0})}{( 1-\epsilon_{0} -\epsilon_{1}) \eta_{0}}$ and 
$\beta \approx \frac{B(A-1)}{A - B}=\frac{\epsilon_{0} (\eta_{0}-\epsilon_{1})}{( 1-\epsilon_{0} -\epsilon_{1}) \eta_{1}}$.
\cite{gendra_beating_2012,audenaert_eisert_2005}
Neglecting the overshooting also allows us to 
consider $Z$ as a stochastic variable that
takes two values $Z\in\{a,b\}$. Under hypothesis 
$H_{0}$, $a$ occurs with probability $P_{0}(Z=a)=\alpha$ 
and  $b$ with $P_{0}(Z=b)=1-\alpha $; while under hypothesis 
$H_{1}$, $a$ and $b$ occur with probabilities 
\mbox{$P_{1}(Z=a)=1-\beta$} and   $P_{1}(Z=b)=\beta$. So, 
 \begin{align}
\mean{Z}_{0}=a \alpha +b (1-\alpha) &\;{\rm and}&
\mean{Z}_{1}=a (1-\beta) +b \beta.
 \end{align}
Making use of \eqref{eq:Z1} one can  now write a closed expression for $\mean{N}_{0}$ and $\mean{N}_{1}$   in terms of $\epsilon_{1},\epsilon_{0}$ and the priors.
A remarkable property of the SPRT with  error probabilities $\alpha$ and $\beta$ is that it minimizes \emph{both} $\mean{N}_{0}$ and $\mean{N}_{1}$ among all tests (sequential or otherwise) with bounded  type I and type II  errors.
 This optimality result due to Wald and Wolfowitz \cite{Wald1948OptimumChar} allows us to extend the above results to the asymmetric scenario. For the symmetric scenario, the SPRT has also been shown \cite{simons_improved_1976} to be optimal among all tests respecting a bounded mean error $\bar\epsilon'\leq \bar\epsilon$.
In the asymptotic limit of small error bounds, $\epsilon_{0},\epsilon_{1}\ll 1$, the  threshold values are $a~\sim-\log \epsilon_{1}$ and $b~\sim\log \epsilon_{0}$, which correspond to $\alpha\sim \frac{\eta_{1}}{\eta_{0}}\epsilon_{1}$ and $\beta\sim \frac{\eta_{0}}{\eta_{1}}\epsilon_{0}$, yielding
 \be
\label{eq:Ni}
\mean{N}_{0}\sim-\frac{\log\epsilon_{0}}{D( p\|  q)} \;\;\mathrm{ and }\;\;
\mean{N}_{1}\sim-\frac{\log\epsilon_{1}}{D( q\|  p)}.
\ee
The same expressions hold at leading order in the asymmetric scenario when the type I \& II errors are vanishingly small, replacing $\log\epsilon_1$ and  $\log\epsilon_0$ by $\log\alpha$ and $\log\beta$ respectively---and in the symmetric scenario
replacing both quantities by $\log\bar\epsilon$.
If one of the error thresholds,  say $\alpha$, is kept finite while the second is made vanishingly small $\beta \ll 1$,  $\mean{N}_{1}$ remains finite, while the other conditional mean scales as $\mean{N}_{0}\sim-\frac{(1- \alpha)\log\beta}{D( p\|  q)}$.

In the supplemental material \cite{SM} we apply these results to the discrimination of qubit states using projective measurements and give closed expressions for the optimal Bayesian mean number of copies $\mean{N}:=\eta_0\mean{N}_0+\eta_1\mean{N}_1$.  Figure \ref{fig:mix} shows that in the symmetric setting these restricted sequential strategies already require on average between 25-50\% less resources than the best deterministic  strategy  that uses a fixed number  of copies $N_{\rm Ch}\sim-\log\bar\epsilon/\xi_{\rm Ch}$  \cite{calsamiglia_quantum_2008,audenaert_discriminating_2007}, 
and  requires non-trivial collective measurements \cite{calsamiglia_local_2010}.

\begin{figure}[t]
\centering
      \includegraphics[width=1.\columnwidth]{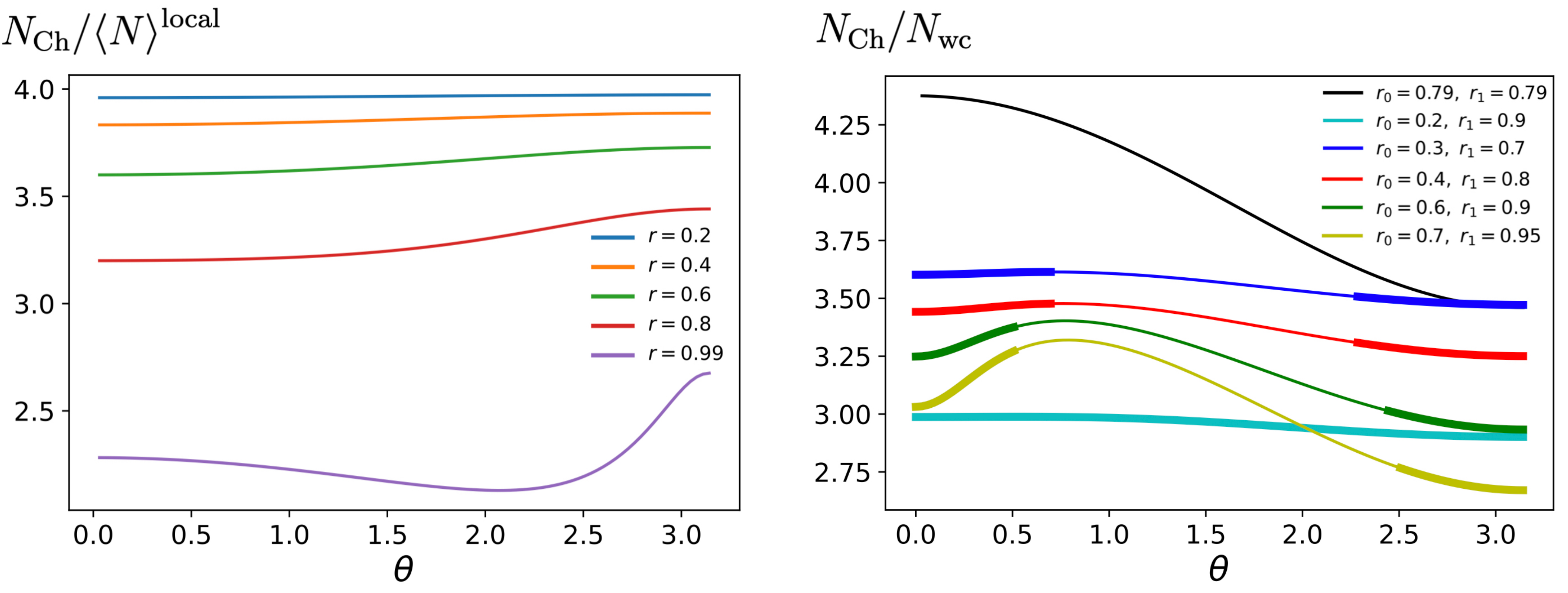}
        \caption{
		Left: ratio between the number of copies required by the best deterministic strategy, $N_\text{Ch}$, and the Bayesian mean number of copies for a sequential strategy based on fixed local unbiased measurements, $\mean{N}_\text{local}$, for pairs of states of purity $r$ and relative angle $\theta$ \cite{SM}. Right: ratio between $N_\text{Ch}$ and the worst-case $N_\text{wc}$, for pairs of states of different purities $r_0$, $r_1$. The thin lines use the expression~\eqref{lowerbound} for $N_\text{wc}$, whereas the thick lines represent the cases for which this ultimate limit of $N_{\text{wc}}$ is attained by a block-sampling strategy.
		}
    \label{fig:mix}
\end{figure}

\paragraph{Ultimate quantum limit.}
Quantum mechanics allows for much more sophisticated strategies. For a start, performing a non-projective generalized measurement already gives important advantages (see below). One can also adapt the measurements depending on the previous measurement outcomes and, importantly,  measurements may be weak so that each new measurement  acts on a fresh copy but also on the preceding, already measured, copies.
Without loss of generality we can assume that at every step $k$ we perform a measurement with three outcomes $x_{k}\in\{0,1,2\}$: the first two must fulfill  conditions \eqref{eq:cond0} and \eqref{eq:cond1} and trigger the corresponding guess ($H_{0}$ or $H_{1}$ respectively), while the third outcome signals to continue measuring having an additional fresh copy available. 
The measurement at step $k$ is characterized by a quantum instrument
$\mathcal{M}^{k}=\{\mathcal{M}^{k}_{0},\mathcal{M}^{k}_{1},\mathcal{M}^{k}_{2}\}$, and the  sequential strategy is given by a sequence of instruments $\mathcal{M}=\{\mathcal{M}^{k}\}_{k=1}^{\infty}$. With this,  given hypothesis $\nu=\{0,1\}$, the probability of getting outcome  $x_{k}$ at step $k$  can be written as
\be
P_{\nu}(x_{k}):= \tr[ \mathcal{M}^{k}_{x_{k}}\circ \mathcal{M}^{k-1}_{2}\ldots \circ \mathcal{M}^{1}_{2}(\rho^{\otimes k}_{\nu})]=\tr(E_{x_{k}}^k \rho^{\otimes k}_{\nu})
\label{eq:probk}
\ee
where we have used that in order to arrive to step $k$ a ``continue'' outcome must be triggered in all previous steps, and in the last equality we have  defined the effective POVM  $E^{k}=\{E_{i}^{k}\}_{i=0}^{2}$.
Making use of the indicator function $\id_{k\leq N}$, the mean number of steps under hypothesis $\nu$ can be computed as
\be
\mean{N}_{\nu}=\mean{\sum_{k=1}^{N} 1}_{\nu}=
\mean{\sum_{k=1}^{\infty} \id_{k\leq N}}_{\nu}=
\mean{\sum_{n=0}^{\infty} \id_{n< N}}_{\nu}=
\sum_{n=0}^{\infty} T_{\nu}^{n}
\label{eq:meanNnu}
\ee  
where  $T_{\nu}^{n}=P_{\nu}(n<N)$ is the probability that the sequence does not stop at step $n$, which from \eqref{eq:probk} is given by $T^{n}_{\nu}= P_{\nu}(x_{n}=2)$.
Optimizing $\mean{N}_{\nu}$ over all quantum sequential strategies 
$\mathcal{M}$ is daunting, as all terms $T_{\nu}^{n}$  are  strongly interrelated through the intricate structure of $E^{n}$. However,  a lower bound  to each $T^{n}_{\nu}$ can be found by relaxing such structure and only imposing minimal requirements on the effective POVM; namely the error bounds 
\eqref{eq:alphaAbetaB},  positivity and  completeness:
\begin{align}\label{optimQn}
&\min_{E^{n}} \tr(E_{2}^n \rho^{\otimes n}_{\nu}) \; {\rm s.t.} \; E^n_i\geq 0,\; \sum_{i=0}^{2} E^n_i=\id, 
\;\mathrm{and}  \\
&  \tr[E^n_1\left( \sigma^{\otimes n} -A \rho^{\otimes n}\right)]\geq 0 ,\; 
\tr[E^n_0\left( \rho^{\otimes n} -B^{-1} \sigma^{\otimes n}\right) ]\geq 0. \nonumber
\end{align}
This semi-definite program, which can be considered a two-sided version of the quantum Neyman-Pearson test \cite{audenaert_asymptotic_2007}, is an interesting open problem in its own right. Our focus, however, is the asymptotic regime of small error bounds.
In these asymptotic scenarios we are able to show, exploiting some recent strong converse results in hypothesis testing \cite{cooney_strong_2016,beigi_quantum_2020}, that for all  {$n<n^*=-\frac{\log\epsilon_{0}(1-A^{-1})}{D(\rho\| \sigma)} $},  $T^{n}_0 \geq 1-O(\epsilon_{0}^{\kappa})$ for some  $\kappa \in(0,1)$ \cite{SM},
which leads to the desired bound:
\begin{equation}
    \label{lowerbound}
\mean{N}_0\geq \sum_{n=0}^{\lfloor n^{*}\rfloor} T_{0}^{n} \geq  -\frac{\log\epsilon_{0}(1-A^{-1})}{D(\rho\| \sigma)} +O(1) \,.
\end{equation}
An analogous bound holds for $\mean{N}_1$. The bounds for asymmetric (symmetric) scenarios (see \cite{SM}) take the same form, replacing $\log\epsilon_{0}$ by $\log\beta$ ($\log\bar\epsilon$) and $A^{-1}$ by $\alpha$  ($\bar\epsilon$). 
In the asymmetric scenario where $\epsilon_1$ or $\alpha$ is kept finite, it also holds that $\mean{N}_1=O(1)$ and $\mean{N}_0$ is given by the appropriate version of \eqref{lowerbound}.

\paragraph{Attainability and upper bounds.} 
Consider a sequential strategy that involves a fixed, collective measurement 
${K}=\{K_{i}\}$, acting on consecutive blocks of $\ell$ copies, 
 yielding two possible  distributions $\P^{\ell}_K, \Q^{\ell}_K$. 
 Using the classical SPRT we get that   \be
 \mean{N}_{0}= \ell \inf_{K} \mean{M}_{0}\sim\ell\inf_{K} \frac{-\log\epsilon_{0}}{D(\P^{\ell}_K\| \Q^{\ell}_K)}\sim \frac{-\log\epsilon_{0}}{D(\rho\| \sigma)}
 \label{batching}
 \ee
 where $M$ is the number of blocks used at the stopping time.   In the last relation of \eqref{batching} we have used the fact that we are in the asymptotic setting where 
 $\epsilon_0 \ll 1$
 and therefore we can take arbitrarily long block lengths $\ell\gg 1$.  We also exploit the following  property of the measured relative entropy \cite{hiai_proper_1991,Hayashi_2001}: $\sup_{K} D(\Q^{\ell}_K\| \P^{\ell}_K)\sim\ell D(\sigma\| \rho)$.
 
 Notice, however, that for arbitrary states $\rho$ and $\sigma$ block sampling can attain either $\mean{N}_0$ or $\mean{N}_1$, but it is unknown whether one can attain in general both bounds simultaneously, i.e., whether a measurement achieving the supremum of $\lim_{\ell\to \infty}\tfrac{1}{\ell}D(\Q^{\ell}_K\| \P^{\ell}_K)$ can also attain the supremum of $\lim_{\ell\to \infty}\tfrac{1}{\ell}D(\P^{\ell}_K\| \Q^{\ell}_K)$. For instance, if we wish to optimize the Bayesian mean number of copies $\mean N$,
 we can use block sampling to attain
\begin{equation}
 \mean{N}_{\mathrm{block}} \sim \lim_{\ell\to\infty} \inf_{K} \left(\frac{-\ell\eta_0\log\epsilon_{0}}{D(\P_K^\ell\| 
\Q_K^\ell)}-\frac{\ell \eta_1\log\epsilon_{1}}{D(\Q_K^\ell\| \P_K^{\ell})}\right).
\label{eq:Nblock}
\end{equation}
However,  this  strategy might be sub-optimal and hence it only provides an upper bound to the optimal Bayesian mean $\mean N \leq \mean{N}_{\mathrm{block}}$. 
This notwithstanding, there are at least two cases when this upper bound coincides with the lower bound provided by \eqref{lowerbound}: when $\rho$ and $\sigma$ commute, and when
the two states do not have common support. If, say, $\operatorname{supp}(\sigma) \cap \operatorname{ker}(\rho) \neq 0$,  one can use block-sampling to attain \eqref{lowerbound} for $\mean{N}_{0}$ and always detect $\rho$ with a finite number of copies---note that since $D(\sigma\| \rho)=\infty$, the lower bound $\mean{N}_{1}=O(1)$ is also attained. 

We can also give achievable lower bounds for a worst-case type figure of merit
${N}_{\rm wc}:=\max\{\mean{N}_0,\mean{N}_1\}$.
If, say, $\mean{N}_0>\mean{N}_1$, then in 
\cite{SM} we give some instances of qubit pairs
where a specific block-sampling strategy 
\cite{Hayashi_2001} saturates \eqref{lowerbound} for 
$\mean{N}_0$, while at the same time 
$\lim_{\ell\to\infty}\tfrac{1}{\ell}D(\Q_K^\ell\| 
\P_K^\ell)\geq D(\rho \| \sigma)$, and hence  
\eqref{lowerbound} provides the ultimate attainable 
limit for ${N}_{\rm wc}$. 
In Fig.~\ref{fig:mix} we compare ${N}_{\rm wc}$ with $N_{\text{Ch}}$ for several pairs of states, highlighting the achievable cases, and show a consistent advantage of sequential protocols over deterministic ones~\footnote{Note that the comparison with $N_{\text{Ch}}$ is unfavorable to sequential strategies. Substituting $\epsilon_0$ and $\epsilon_1$ by $\bar{\epsilon}$ in Eq.~\eqref{lowerbound} implies that \emph{each} type of error is independently constrained, whereas $N_\text{Ch}$ refers to a deterministic (symmetric) protocol where the \emph{mean} error is $\bar{\epsilon}$ and thus to a weaker version of the problem. In spite of this, the sequential scenario displays a significant advantage.}.

Finally, we note that, in an asymmetric scenario where $\mean{N}_1$ is finite and the value of $\mean{N}_0$ achieves the lower bound~\eqref{lowerbound}, sequential protocols provide a strict advantage over Stein's limit for deterministic protocols by a factor $(1-\alpha)$.

\paragraph{The curious case of pure states.}  
If the two states are pure, the behavior of $\mean{N}_{\nu}$ changes drastically: it is possible to reach a decision with guaranteed zero error using a finite average number of copies. 
To see this, consider again Eq.~\eqref{eq:meanNnu}. Under a zero-error condition, the minimal (unrestricted) $T_\nu^{n}$ is achieved by a global \emph{unambiguous} three-outcome POVM \cite{Ivanovic1987,Dieks1988,peres_how_1988} on $n$ copies, which identifies the true state with zero error when the first or the second outcome occurs ---at the expense of having a third, inconclusive outcome. 
For a single-copy POVM over pure states, the probabilities $c_\nu$ of the inconclusive outcome under $H_\nu$ are subject to the tradeoff relation $c_0 c_1 \geq \tr\rho\sigma$~\cite{Sentis2018}, where equality can always be attained by suitable POVM that maximizes the probability of a successful identification. Likewise, for a global measurement on $n$ copies we have $T_0^n T_1^n\geq (\tr\rho\sigma)^n$. Now, it is evident that a sequence of $n$ locally optimal unambiguous POVMs applied on every copy, for which $T_\nu^n=c_\nu^n$, also fulfills the global optimality condition. Hence, we have
\begin{equation}
\mean{N}_\nu \geq \sum_{n=0}^\infty T_\nu^n = \sum_{n=0}^\infty c_\nu^n = \frac{1}{1-c_\nu} =: \mean{N}_\nu^{\rm local} \,.
\label{eq:unamb}
\end{equation}
This shows that, for pure states, it suffices to perform local unambiguous measurements to attain the optimal (finite) average number of copies with zero error under hypothesis $H_\nu$. Note that because of the tradeoff $c_0 c_1\geq \tr\rho\sigma$ one cannot attain the minimal values of $\mean{N}_0^{\rm local}$ and $\mean{N}_1^{\rm local}$ for general states $\rho, \sigma$, simultaneously.
%
%
For instance,  one can reach the minimal value $c_{0}=\tr(\rho\sigma)$ for one hypothesis, but then having a maximal value $c_{1}=1$ for the second; or choose the optimal symmetric setting,  $c_{0}=c_{1}=\sqrt{\tr\rho\sigma}$, that 
achieves the minimum value of
both the worst-case $N_{\mathrm wc}$ and the Bayesian mean $\mean{N}$ with equal priors (see \cite{SM}). 
This is in stark contrast with the behavior found in \cite{Slussarenko2017MinASN}, where all strategies considered were based on two-outcome projective measurements, for which the average number of copies scaled as $\mean{N}\propto -\log\epsilon$.

\paragraph{Acknowledgments.}
We acknowledge financial support from the Spanish Agencia Estatal de Investigaci\'on, project 
PID2019-107609GB-I00,  from Secretaria d'Universitats i Recerca del Departament d'Empresa i Coneixement de la Generalitat de Catalunya, co-funded by the European Union Regional Development Fund within the ERDF Operational Program of Catalunya (project QuantumCat, ref. 001-P-001644), and Generalitat de Catalunya CIRIT 2017-SGR-1127. CH acknowledge financial support from the European Research Council (ERC Grant Agreement No. 81876), VILLUM FONDEN via the QMATH Centre of Excellence (Grant No.10059) and the QuantERA ERA-NET Cofund in Quantum Technologies implemented within the European Union Horizon 2020 Programme (QuantAlgo project) via Innovation Fund Denmark, JC acknowledges from the QuantERA grant C'MON-QSENS!, via Spanish MICINN PCI2019-111869-2, EMV thanks financial support from CONACYT.
JC also acknowledges support from ICREA Academia award.

\bibliographystyle{apsrev4-1}
%

\begin{widetext}
\section{Supplemental Material}

\maketitle

This supplemental material contains some technical details as well as  some extensions for the interested reader. 
In  Sec.~\ref{sec:Qstar} we state and prove  a theorem that provides the  lower bounds for the average number of copies,  Eq.~(12) in the main text.  In Sec.~\ref{sec:qubits} explicit
expressions for the optimal sequential test and attainability regions of the worst case bound are provided for the qubit case. In Sec.~\ref{sec:finitedim}  we present a second
theorem that provides  a general lower bound for the deviation of the measured entropy from its maximum value for arbitrary finite dimensions.  Finally, in Sec.~\ref{sec:unambiguous} we give the optimality 
proof for the zero-error protocol for pure states.

\section{Converse proof}
\label{sec:Qstar}
Our aim here is to prove that if one of the two error bounds is vanishingly small, say $\epsilon_{0}\ll 1$ under the strong error condition, or $\beta\ll 1$ in the asymmetric scenario (see main text for the definitions), then  the mean number of copies under hypothesis $H_{0}$ is always lower-bounded by
\begin{align}
    \label{lower bound}
\mean{N}_0&\geq \sum_{n=0}^{\lfloor n^{*}\rfloor} T_{0}^{n} \geq  -\frac{\log\epsilon_{0}(1-1/A)}{D(\rho\| \sigma)} +O(1) \,\quad \mbox{ or }\\
\mean{N}_0&\geq \sum_{n=0}^{\lfloor n^{*}\rfloor} T_{0}^{n} \geq  -\frac{\log\beta(1-\alpha)}{D(\rho\| \sigma)} +O(1) \,,
\end{align}
respectively. Analogous bounds also hold for $\mean{N}_{1}$ when $\epsilon\ll 1$ or $\alpha\ll 1$, replacing $\epsilon_{0}\leftrightarrow \epsilon_{1}$, $A\leftrightarrow B$, $\alpha\leftrightarrow \beta$, and $\rho\leftrightarrow \sigma$.
We will first provide the proof for the strong error condition and indicate how to adapt it to  the other hypothesis testing scenarios considered here.

In the main text (MT) we have shown that under hypothesis $H_{0}$ the mean number of sampled copies  is given by
\be
\mean{N}_{0}=\sum_{n=0}^{\infty} T_{0}^{n}\geq \sum_{n=0}^{n^{*}} T_{0}^{n}	\,,
\label{eq:meanNnu}
\ee  
where the last inequality holds for all values of $n^{*}$ since $T_{0}^{n}\geq0$. The $n$th term in sum, $T_{0}^{n}=P_{0}(n<N)$, is the probability of getting a ``continue'' outcome at step $n$, corresponding to the POVM element $E_{2}^{n}$ implicitly defined in (12) 
in MT. The continue probability at a particular step $n$ obeys a lower bound  given by the following semidefinite program (SDP)
\be\label{eq:Tks}
T_{0}^{n}\geq\tilde T_{0}^{n}:=\min_{E^{n}} \tr(E_{2}^n \rho^{\otimes n}_{\nu}) \; {\rm s.t.}
\left\{
\begin{array}{ll}
& \mathrm{0) }\quad \{E^n_i\geq 0\}_{i=0}^{2} \;\mathrm{ and } \sum_{i=0}^{2} E^n_i=\id \\\\
& \mathrm{1) }\quad  \tr[E^n_1\left( \sigma^{\otimes n} -A \rho^{\otimes n}\right)]\geq 0 \; \\\\
&\mathrm{2) }\quad  \tr[E^n_0\left( \rho^{\otimes n} -B^{-1} \sigma^{\otimes n}\right) ]\geq 0
\end{array}\right. .
\ee
The  conditions in 0) have to hold for any valid POVM, while the second and third conditions are an alternative way of writing the strong errors conditions, (1) and (2) in the MT, as SDP  constrains.
For small error bounds $\epsilon_{0}\ll 1$ (i.e. $A=\frac{\eta_{0}}{\eta_{1}}\frac{1-\epsilon_{1}}{\epsilon_{1}}\gg 0$) the solution of the SDP program in \eqref{eq:Tks} has a  characteristic dependence on $n$ as illustrated in Figure \ref{fig:Tvskplot}.
\begin{figure}[h]
    \includegraphics[scale=0.35]{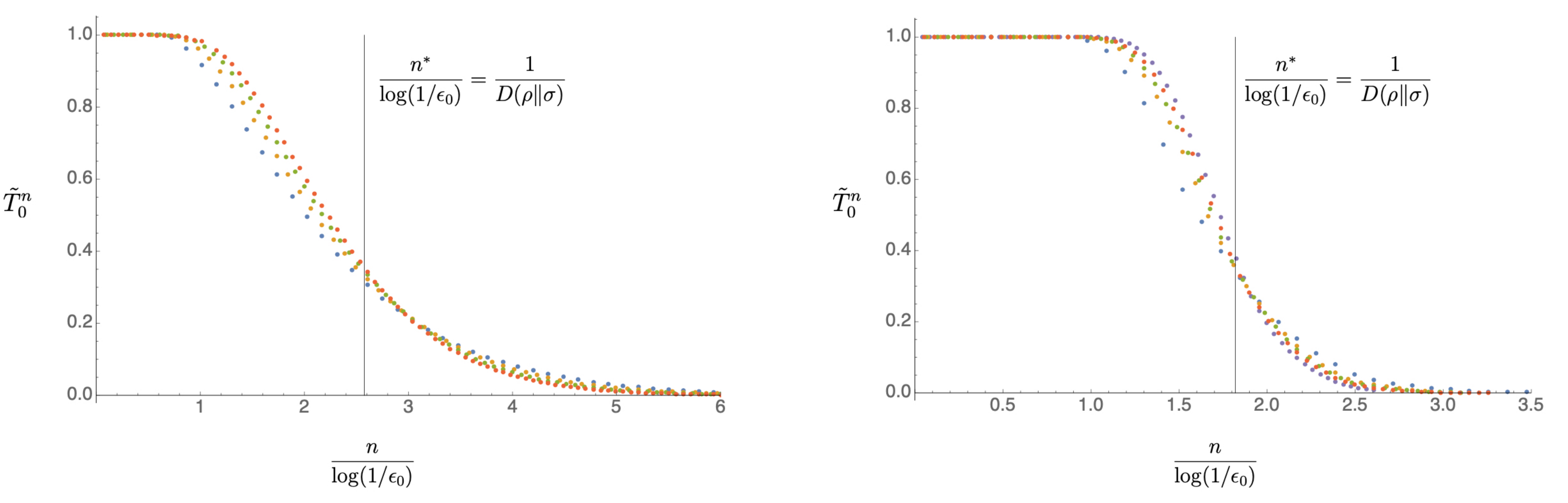}
    \caption{Lower bound $\tilde T_{0}^{n}$ on the continue probability as a function of the step number $n$ for: (Left) two  qubits of equal purity $r=0.9$ and relative angle $\theta=\pi/4$, for error bounds $\epsilon_{0}=\{10^{-3},10^{-4},10^{-5},10^{-6}\}$ (from bottom to top, looking at the left side of the figure); (Right) two commuting qubits with $r=0.5$ and $\theta=\pi$, for $\epsilon_{0}=\{10^{-8},10^{-12},10^{-14},10^{-16},10^{-20}\}$ (from bottom to top, looking at the left side of the figure). The values on the left plot have been obtained by numerically solving the SDP program of \eqref{eq:Tks}, exploiting the block-diagonal structure of iid quantum states  $\rho^{\otimes n}$ and $\sigma^{\otimes n}$ (see Section \ref{sec:qubits}).}
\label{fig:Tvskplot}
\end{figure}
When the number of sampled copies $n$ is small  it is not possible to meet the low error bound and the probability of  getting a ``continue'' outcome is  $T_{0}^{n}=1$. This probability  remains constant as $n$ increases until it approaches the critical point $n^{*}\sim \log(1/\epsilon) /D(\rho \|  \sigma)$, at which point it rapidly drops to zero. Note that this drop becomes more abrupt as  $\epsilon_{0}$ decreases. 
These observations suggest that $\sum_{n=0}^{n^{*}} \tilde T^{n}_{0}$ is a very tight lower bound to  $\sum_{n=0}^{\infty} \tilde T^{n}_{0}$ (area under the curve in  Figure \ref{fig:Tvskplot}) and is given to a very good approximation by $\sum_{n=0}^{n^{*}} \tilde T^{n}_{0}\approx n^{*}$.

With this at hand we can now carry on with the formal presentation and proof of the lower bound.

\begin{theorem}
Given two finite-dimensional states, $\rho$  ($H_{0}$) and $\sigma$ ($H_{1}$), occurring with prior probabilities
$\eta_{0}$ and $\eta_{1}$ respectively, the most general quantum sequential strategy that satisfies  the strong error conditions $P(H_{0}|x_{N}=0)\geq 1-\epsilon_{0}$ and $P(H_{1} |x_{N}=1)\geq 1-\epsilon_{1}$, where  
$x_{N}\in\{0,1\}$ is the output of the measurement at the stopping time $N$, necessarily fulfills the following asymptotic lower bound for the mean number of sampled copies when $\epsilon_{0}\ll 1$:
    \begin{equation}
\mean{N}_0\geq   -\frac{(1-A^{-1})\log\epsilon_{0}}{D(\rho\| \sigma)} +O(1) \,, \quad \mbox{ where }\quad \,A=\frac{\eta_{0}}{\eta_{1}}\frac{1-\epsilon_{1}}{\epsilon_{1}}.
\label{eq:N0low1}
\end{equation}

Similarly, the most general quantum sequential strategy that satisfies  the (weak)  error conditions $P(\hat H_{1}|\rho)\leq \alpha$ and $P(\hat H_{0}|\sigma)\leq \beta$, where $\hat H_{0}$ and $\hat H_{1}$ are the events of accepting hypothesis 0 and 1 respectively, necessarily fulfills the following asymptotic lower bound for the mean  number of sampled copies when $\beta\ll 1$:
    \begin{equation}
\mean{N}_0\geq   -\frac{(1-\alpha)\log\beta}{D(\rho\| \sigma)} +O(1) \,.
\label{eq:N0low2}
\end{equation}
\end{theorem}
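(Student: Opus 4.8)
The plan is a converse-by-relaxation argument sitting on top of Eq.~\eqref{eq:meanNnu}. Since every term is nonnegative, $\mean{N}_0\ge\sum_{n=0}^{M}T_0^{n}$ for any cutoff $M$; I would take $M=\lfloor(1-\delta)n^{*}\rfloor$, with $n^{*}=-\log[\epsilon_0(1-A^{-1})]/D(\rho\|\sigma)$ and $\delta>0$ arbitrarily small. Each continue probability is lower bounded by the SDP value $T_0^{n}\ge\tilde T_0^{n}$ of Eq.~\eqref{eq:Tks}, in which the concatenated-instrument structure of $E^{n}$ is forgotten and only positivity, completeness and the two error inequalities are retained. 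Writing $\tr(E_2^{n}\rho^{\otimes n})=1-\tr(E_0^{n}\rho^{\otimes n})-\tr(E_1^{n}\rho^{\otimes n})$ and bounding the maximum of the sum by the sum of the maxima (each over $0\le E\le\id$ with its own error constraint) gives
\begin{equation}
\tilde T_0^{n}\ \ge\ 1-A^{-1}-\pi_n,\qquad \pi_n:=\max\{\tr(E\rho^{\otimes n}):0\le E\le\id,\ \tr(E\sigma^{\otimes n})\le B\,\tr(E\rho^{\otimes n})\},
\end{equation}
the term $A^{-1}$ being immediate from the first constraint, $\tr(E_1^{n}\rho^{\otimes n})\le A^{-1}\tr(E_1^{n}\sigma^{\otimes n})\le A^{-1}$. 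Everything then reduces to showing that the one-sided quantum Neyman--Pearson quantity $\pi_n$ is tiny for all $n$ up to essentially $n^{*}$.

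This is where the exponential strong converse of quantum Stein's lemma enters \cite{cooney_strong_2016,beigi_quantum_2020}. Any admissible $E$ has $\tr(E\sigma^{\otimes n})\le B\sim\tfrac{\eta_0}{\eta_1}\epsilon_0$: it is a test whose type-II error decays at a rate $\ge-\tfrac1n\log B$ that, for $n\le(1-\delta)n^{*}$, strictly exceeds $D(\rho\|\sigma)$, so its ``success'' probability must be exponentially small. Concretely, monotonicity of the sandwiched R\'enyi divergence $\tilde D_\alpha$ under the two-outcome test $\{E,\id-E\}$ together with its additivity $\tilde D_\alpha(\rho^{\otimes n}\|\sigma^{\otimes n})=n\tilde D_\alpha(\rho\|\sigma)$ \cite{Muller2013OnQuantum,Mosonyi2015} give, for every $\alpha>1$,
\begin{equation}
\tr(E\rho^{\otimes n})\ \le\ \tr(E\sigma^{\otimes n})^{\frac{\alpha-1}{\alpha}}\,\ex{\frac{\alpha-1}{\alpha}\,n\,\tilde D_\alpha(\rho\|\sigma)}\ \le\ \ex{\frac{\alpha-1}{\alpha}\left(\log B+n\,\tilde D_\alpha(\rho\|\sigma)\right)}.
\end{equation}
Choosing $\alpha>1$ close enough to $1$ that $\tilde D_\alpha(\rho\|\sigma)<D(\rho\|\sigma)/(1-\tfrac{\delta}{2})$ (possible since $\tilde D_\alpha\to D$ as $\alpha\to1^{+}$), and inserting $\log B=\log\epsilon_0+O(1)$ and $n\le(1-\delta)n^{*}=(1-\delta)(-\log\epsilon_0)/D(\rho\|\sigma)+O(1)$, the exponent is $\le\kappa\log\epsilon_0+O(1)$ with $\kappa=\tfrac{\alpha-1}{\alpha}\big(1-(1-\delta)\tilde D_\alpha(\rho\|\sigma)/D(\rho\|\sigma)\big)\in(0,1)$. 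Hence $\pi_n=O(\epsilon_0^{\kappa})$ uniformly for $n\le(1-\delta)n^{*}$, and so $\tilde T_0^{n}\ge1-A^{-1}-O(\epsilon_0^{\kappa})$ there.

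Since $n^{*}\epsilon_0^{\kappa}\to0$, summing gives
\begin{equation}
\mean{N}_0\ \ge\ \sum_{n=0}^{\lfloor(1-\delta)n^{*}\rfloor}\tilde T_0^{n}\ \ge\ (1-\delta)n^{*}(1-A^{-1})-O(n^{*}\epsilon_0^{\kappa})\ =\ (1-\delta)\,\frac{-(1-A^{-1})\log\epsilon_0}{D(\rho\|\sigma)}+O(1),
\end{equation}
and letting $\delta\to0$ yields \eqref{eq:N0low1}. The weak-error bound \eqref{eq:N0low2} follows by the same steps once the per-step SDP is replaced by the \emph{cumulative} effective POVM $\{F_0^{n},F_1^{n},F_2^{n}\}$ after $n$ steps (the events ``$H_0$ already accepted'', ``$H_1$ already accepted'', ``still running''): monotonicity of the cumulative errors gives $\tr(F_1^{n}\rho^{\otimes n})\le\alpha$ and $\tr(F_0^{n}\sigma^{\otimes n})\le\beta$, while $T_0^{n}=\tr(F_2^{n}\rho^{\otimes n})$, so the roles of $A^{-1}$ and $B$ are now played by $\alpha$ and $\beta$. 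The $\mean{N}_1$ statements follow from the substitution $\rho\leftrightarrow\sigma$, $0\leftrightarrow1$, $A\leftrightarrow B$, $\alpha\leftrightarrow\beta$.

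The crux is the strong-converse step: one needs not merely that the success probability of a ``too good'' test is bounded away from $1$, but that it is \emph{polynomially small} in the error bound, uniformly over all $n$ up to (essentially) $n^{*}$ --- this is precisely what turns the plateau $T_0^{n}\approx1$ seen in Fig.~\ref{fig:Tvskplot} into the factor $n^{*}$ in the bound, and it hinges on the converse exponent being strictly positive above $D(\rho\|\sigma)$. A secondary subtlety is the boundary layer $n\in((1-\delta)n^{*},n^{*}]$ where that exponent degrades to zero; dropping those terms and sending $\delta\to0$ costs only a correction smaller in order than $\log(1/\epsilon_0)$, so the stated ``$+O(1)$'' should really be read as ``$+o(\log1/\epsilon_0)$'', which already suffices for the claimed leading-order scaling.
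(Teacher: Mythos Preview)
Your argument is essentially the same as the paper's. Both proofs truncate the sum in Eq.~\eqref{eq:meanNnu} at (roughly) $n^{*}$, relax each $T_0^{n}$ to the SDP of Eq.~\eqref{eq:Tks}, split off the trivial $A^{-1}$ contribution from $E_1^{n}$, and then invoke the strong converse of quantum Stein to show that $\tr(E_0^{n}\rho^{\otimes n})$ is polynomially small in $\epsilon_0$ for $n$ below the threshold. The paper packages this last step as Lemma~1, applying the Mosonyi--Ogawa inequality to the binned test $F^{n}=\{E_0^{n},E_1^{n}+E_2^{n}\}$ and phrasing the conclusion as $\tilde\alpha_n\ge 1-\epsilon^{\kappa_n}$; you do the equivalent computation directly from data processing and additivity of $\tilde D_\alpha$, arriving at the same exponent $\kappa=\tfrac{\alpha-1}{\alpha}\bigl(1-(1-\delta)\tilde D_\alpha/D\bigr)$. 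For the weak-error case, your cumulative-POVM reformulation is a clean way to obtain the per-step bounds $\tr(F_1^{n}\rho^{\otimes n})\le\alpha$ and $\tr(F_0^{n}\sigma^{\otimes n})\le\beta$; the paper obtains the same inequalities by noting that the step-wise acceptance probabilities are nonnegative summands of the total errors.

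Your closing caveat is well placed. The paper sums the bound $T_0^n\ge 1-A^{-1}-\epsilon^{\kappa_n}$ up to $n^{*}$, but $\kappa_n\to 0$ as $n\uparrow n^{*}$, so the correction $\sum_{n<n^{*}}\epsilon^{\kappa_n}$ is not obviously $O(1)$; your $(1-\delta)$-cutoff followed by $\delta\to 0$ is a more transparent way to secure the leading-order coefficient, at the cost of downgrading the remainder to $o(\log 1/\epsilon_0)$, which is all that is actually proved in either version.
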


\begin{proof}

We start by noting that the strong error conditions [see \eqref{eq:Tks}] imply
\begin{eqnarray}
\label{eq:E1bounds}
\tr (E^n_1\rho^{\otimes n})&\leq&\frac{1}{A} \tr (E^n_1 \sigma^{\otimes n})\leq 
\frac{1}{A}=\frac{\eta_{1}}{\eta_{0}}\frac{\epsilon_{1}}{1-\epsilon_{1}} \,,\\
\tr (E^n_0 \sigma^{\otimes n})&\leq& \frac{1}{B} \tr (E^n_0\rho^{\otimes n})\leq
\frac{1}{B}=\frac{\eta_{0}}{\eta_{1}}\frac{\epsilon_{0}}{1-\epsilon_{0}} \,.
\label{eq:E0bounds}
\end{eqnarray}

Next we form a two-outcome POVM by binning two outcomes of the effective POVM at step $n$, as defined in (12) in MT, \mbox{$ F^{n}=\{ F^n_0=E^{n}_{0},{F}^n_1:= E^n_1+E^n_{2}\}$}. This measurement can be used to  discriminate between $\rho^{\otimes n}$ and $\sigma^{\otimes n}$ and the associated type-I and type-II errors will be denoted by $\tilde{\alpha}_n=\Tr[F^n_1\rho^{\otimes n}]$ and
$\tilde{\beta}_n=\Tr[F^n_0\sigma^{\otimes n}]$.
From \eqref{eq:E0bounds} and the above definitions it follows that
$\tilde{\beta}_n\leq \frac{\eta_{0}}{\eta_{1}}\epsilon_{0}+O(\epsilon_{0}^{2})$.
In addition, using \eqref{eq:E0bounds} we find that the probability of continuing at step $n$
when $H_{0}$ holds satisfies 
\be
T_{0}^{n}=\tr[E^n_2\rho^{\otimes n}]=\tilde{\alpha}_n-\tr[E^n_1\rho^{\otimes n}]\geq \tilde{\alpha}_n-\frac{1}{A} \,.
\label{eq:T0n}
\ee

Now we use Lemma 1 stated below, which uses the recently developed methods for strong converse exponents \cite{cooney_strong_2016} in order to establish a lower bound  on $\tilde{\alpha}_n$ 
when the type-II error $\tilde{\beta}_n$ is bounded by $\epsilon$  and when $n$ is below a critical threshold $n^{*}$.
In particular, applying  Lemma 1 to the test defined by $F_{n}$ above, with
type-I \& II errors $\tilde \alpha_{n}$ and $\tilde{\beta}_n\leq \frac{\eta_{0}}{\eta_{1}}\epsilon_{0}+O(\epsilon_{0}^{2})=:\epsilon$, we have that \eqref{eq:T0n} reads
\be
T^{n}_{0}\geq  \tilde \alpha_{n}-\frac{1}{A}\geq 1-\epsilon^{ \kappa_n(\rho,\sigma)} -\frac{1}{A} 
\quad  \forall \; n<n^{*} \,,
\ee
where $\kappa_n(\rho,\sigma)>0$.
The proof for the strong error conditions ends by inserting this lower bound in \eqref{eq:meanNnu}.

For the weak form of error bounds one can follow the same steps as above by writing the type-I and type-II errors of the sequential strategy as $P(\hat H_{1}|\rho)=\sum_{k=1}^{\infty} \tilde \alpha_{n}$ and $P(\hat H_{0}|\sigma)=\sum_{k=1}^{\infty} \tilde \beta_{n}$. Since $\tilde \beta_{n}\geq 0$,  the error bound $P(\hat H_{0}|\sigma)\leq \beta$  translates to 
$ \tilde \beta_{n}\leq \beta$, and similarly $ \tilde \alpha_{n}\leq \alpha$.  The former is directly of the form required for Lemma 1, while the latter can be used instead of \eqref{eq:E1bounds}, i.e., $\tr (E^n_1\rho^{\otimes n})=\alpha_{n}\leq\alpha$, hence  $A$ in \eqref{eq:N0low1} becomes $\alpha$ in  \eqref{eq:N0low2}.

\end{proof}

\begin{lemma}
Let $\rho$ and $\sigma$ be finite-dimensional density operators associated to hypotheses $H_{0}$ and $H_{1}$, respectively. For any quantum hypothesis testing strategy that uses $n$ copies of the states and that respects the type-II error bound $\beta_n \leq \epsilon$, with $\epsilon\ll 1$, the type-I error will converge to one at least as  
    \begin{equation}
        \alpha_n \geq 1-\epsilon^{\kappa_n(\rho,\sigma)}  \quad\mbox{for all}\quad n < n^{*}=\frac{-\log\epsilon}{D(\rho \| \sigma)} \,,
    \end{equation}
    where $0<\kappa_n(\rho,\sigma)<1$ is given by
    \begin{equation}
        \kappa_n(\rho,\sigma)=\sup_{s>1}\frac{s-1}{s}\frac{\xi_n-\tilde{D}_s(\rho\| \sigma)}{\xi_n} = \frac{H(\xi_n)}{\xi_n}\,,
         \quad  \mathrm{ with }\quad   \xi_n=-\frac{ \log\epsilon}{n}>-\frac{ \log\epsilon}{n^{*}}=D(\rho\| \sigma) \,,
    \end{equation}
    where $H(\xi_n)$ is the strong converse exponent~\citep{Mosonyi2015} and where  the sandwiched Renyi relative entropy~\cite{Muller2013OnQuantum,wilde_strong_2014} is given by
\be
\tilde{D}_s(\rho\| \sigma)=\frac{1}{s-1} \log \tr\!\left(\sigma^{\frac{1-s}{2 s}} \rho 
\sigma^{\frac{1-s}{2 s}}\right)^{s} \,,
\ee   
taking $\tilde{D}_s(\rho\| \sigma) =\infty$ when $\operatorname{supp} \rho \not\subseteq \operatorname{supp} \sigma$.

\label{eq:kappa}
\end{lemma}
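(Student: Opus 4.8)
\emph{Proof strategy.} The plan is to prove a \emph{one-shot} (finite-$n$) inequality by feeding the type-II error constraint into data processing for the sandwiched R\'enyi divergence; the strong-converse results \cite{Mosonyi2015,cooney_strong_2016} would enter only through the closed form of the exponent $H(r)$, not through their asymptotic statements. First I would regard the $n$-copy test as a binary POVM $\{T,\id-T\}$, so that $1-\alpha_n=\tr[T\rho^{\otimes n}]$ is the probability of correctly accepting $H_{0}$ under $H_{0}$ and $\beta_n=\tr[T\sigma^{\otimes n}]\le\epsilon$; then $\alpha_n\ge 1-\epsilon^{\kappa_n}$ is equivalent to $1-\alpha_n\le\epsilon^{\kappa_n}$.

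The core step is this: the binary measurement $\mathcal{M}(\cdot)=(\tr[T\,\cdot],\tr[(\id-T)\,\cdot])$ is CPTP, so for every $s>1$, data processing for $\tilde D_s$ \cite{Muller2013OnQuantum,wilde_strong_2014} (valid for all $s\ge 1/2$), its additivity on tensor powers, and the fact that $\tilde D_s$ reduces to the classical R\'enyi divergence on the output distributions give
\begin{equation}
\frac{1}{s-1}\log\!\bigl[(1-\alpha_n)^{s}\beta_n^{\,1-s}+\alpha_n^{s}(1-\beta_n)^{1-s}\bigr]\le n\,\tilde D_s(\rho\|\sigma)\,.
\end{equation}
Since $s>1$ I would exponentiate, discard the nonnegative second term on the left, take $s$-th roots, and use $\beta_n\le\epsilon$ with $\epsilon=e^{-n\xi_n}$ to get $1-\alpha_n\le\exp[-n\tfrac{s-1}{s}(\xi_n-\tilde D_s(\rho\|\sigma))]$. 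Optimizing over $s>1$ and recalling $H(\xi_n)=\sup_{s>1}\tfrac{s-1}{s}(\xi_n-\tilde D_s(\rho\|\sigma))$ \cite{Mosonyi2015} then yields $1-\alpha_n\le e^{-nH(\xi_n)}=\epsilon^{H(\xi_n)/\xi_n}$, that is, the claimed bound with $\kappa_n=H(\xi_n)/\xi_n$. (If $\beta_n=0$ the left side of the display is $+\infty$ unless $1-\alpha_n=0$, so the bound is trivial; if $\operatorname{supp}\rho\not\subseteq\operatorname{supp}\sigma$ then $D(\rho\|\sigma)=\infty$, $n^{*}=0$, and the lemma is vacuous.)

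It remains to check $0<\kappa_n<1$, equivalently $0<H(\xi_n)<\xi_n$. Writing $f(s)=\tfrac{s-1}{s}(\xi_n-\tilde D_s(\rho\|\sigma))$: since $\tilde D_s$ is continuous at $s=1$ with $\tilde D_1=D(\rho\|\sigma)$, and $n<n^{*}$ forces $\xi_n>D(\rho\|\sigma)$, for $s>1$ close to $1$ both $\tfrac{s-1}{s}>0$ and $\xi_n-\tilde D_s(\rho\|\sigma)>0$, so $f(s)>0$ and hence $H(\xi_n)>0$. For the upper bound, monotonicity of $\tilde D_s$ in $s$ gives $\tilde D_s(\rho\|\sigma)\ge D(\rho\|\sigma)$ for $s\ge1$, so $f(s)\le\tfrac{s-1}{s}(\xi_n-D(\rho\|\sigma))<\xi_n-D(\rho\|\sigma)$ for every $s>1$, whence $H(\xi_n)\le\xi_n-D(\rho\|\sigma)<\xi_n$ (assuming $\rho\neq\sigma$; otherwise $n^{*}=\infty$ and the bound holds trivially with $\kappa_n=1$). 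I expect the only delicate point to be the first step: one must make sure the data-processing bound displayed above is genuinely self-contained and used in the admissible range $s>1$, rather than secretly importing the asymptotic strong-converse theorem; the rest is elementary algebra together with matching the exponent to the definition of $H(r)$ in \cite{Mosonyi2015}.
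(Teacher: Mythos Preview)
Your proof is correct and follows essentially the same route as the paper. The paper quotes the one-shot inequality $\frac{1}{n}\log(1-\alpha_n)\le\frac{s-1}{s}\bigl(\tilde D_s(\rho\|\sigma)+\frac{1}{n}\log\beta_n\bigr)$ from Mosonyi--Ogawa as a black box, whereas you derive the identical inequality yourself from data processing and additivity of $\tilde D_s$ before carrying out the same substitution $\epsilon=e^{-n\xi_n}$ and the same optimization over $s>1$; the verification of $0<\kappa_n<1$ is also the same in spirit, with your upper bound $H(\xi_n)\le\xi_n-D(\rho\|\sigma)$ being a slightly cleaner way to obtain $\kappa_n<1$ than the paper's argument via the optimizer $s'$.
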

\begin{proof}
The proof makes use of the following strong converse result by Mosonyi and Ogawa ~\cite{Mosonyi2015} that relates the type I and type II errors for an arbitary $n$ by means of the sandwiched Renyi relative entropy:
\begin{equation}
    \frac{1}{n}\log(1-{\alpha_n})\leq\frac{s-1}{s}(\tilde{D}_s(\rho\| \sigma)+\frac{1}{n}\log\beta_n),~~s>1 \,.
\end{equation}
Note that in order to avoid confusion with the type-I error, here we use  $s$ instead of the traditional $\alpha$ used in the Renyi entropies. Among the number of properties that make the sandwiched Renyi relative entropy such a  formidable quantity, here we will use two: i) it increases monotonically with $s$, and ii) $\lim_{s\to 1} \tilde{D}_s(\rho\| \sigma)=D(\rho\| \sigma)$.

Since $\beta_n \leq \epsilon$,
\begin{equation}
    1-{\alpha_n}\leq  \ex{n\frac{s-1}{s}(\tilde{D}_s(\rho\| \sigma)+\frac{1}{n}\log\epsilon)}\,.
    \label{eq:alph}
\end{equation}
Observe that $\forall n < n^{*}$ we can define $\xi_{n}>D(\rho\| \sigma)$ such that
\begin{equation}
    n=-\frac{\log\epsilon}{\xi_n} \,.
    \label{eq:deltatrick}
\end{equation}
Using this parametrization of $n$ in \eqref{eq:alph}, we have
\begin{equation}
    {\alpha_n}\geq 1 -\epsilon^{\frac{s-1}{s}\frac{\xi_n-\tilde{D}_s(\rho \| \sigma)}{\xi_{n} }}\,.
\end{equation}
Hence, if we define the supremum of the exponent
\begin{equation}
        \kappa_n(\rho,\sigma):=\sup_{s>1}\frac{s-1}{s}\frac{\xi_n-\tilde{D}_s(\rho\| \sigma)}{\xi_n} \,,
         \quad  \mathrm{ with }\quad   \xi_n=-\frac{ \log\epsilon}{n}>-\frac{ \log\epsilon}{n^{*}}=D(\rho\| \sigma)\,,
         \label{eq:kappa1}
 \end{equation}
we arrive to the desired result
\begin{equation}
    \alpha_n\geq 1-\epsilon^{ \kappa_n(\rho,\sigma)}~\quad \forall~n< n^*\,.
\end{equation}

%
Taking into account that $0<\frac{s-1}{s}=1-\frac{1}{s}<1$ for all $s>1$, that $\xi_n>D(\rho \| \sigma)$ for $n<n^*$, and from conditions i) and ii) above that $\tilde{D}_s(\rho\| \sigma)>D(\rho\| \sigma)$, 
it follows that there will always be an $s'$ realizing the supremum in~\eqref{eq:kappa1} such that $\xi_n > \tilde{D}_{s'}(\rho\|\sigma)$, and therefore $0<\kappa_n(\rho,\sigma)<1$.
\end{proof}

An alternative way to arrive to the result in  Lemma 1 is provided  in Beigi {\it et al.} \cite{beigi_quantum_2020} where, using quantum reverse hypercontractivity, a second order strong converse result on hypothesis testing is derived.

We finally note that Lemma 1 assures that, below $n^{*}$, the continue probability is $T^{n}\sim 1$. On the other hand, from Stein's Lemma we know that, for fixed (large) $n$, the optimal type-II error rate is given by the relative entropy, i.e., $\beta_n \sim \ex{-n D(\rho\| \sigma)}$. This explains why one does not need to continue measuring after $n>n^{*}=-\log\epsilon/D(\rho\| \sigma)$, and  $T^{n>n^{*}}\sim 0$ (see Fig.~\ref{fig:Tvskplot}), and why we may expect the lower bound to be tight, in the sense that we are not dropping significant contributions by truncanting the sum in \eqref{eq:meanNnu}.
 Of course, this still does not imply the attainability of the lower bound, and even less the simultaneous attainability of the bound for $\mean{N}_{0}$ and the analogous bound for $\mean{N}_{1}$.
 
 \pagebreak
\section{Sequential hypothesis testing for Qubits}\label{sec:qubits}
In this section we study the discrimination of qubit states using 
sequential methodologies, deriving explicit formulae for the mean 
number of copies using different measurement strategies.

\subsection{Optimal sequential test for fixed projective measurements}
We will first study the optimal performance under the simplest type of measurement apparatus, i.e. a fixed Stern-Gerlach-type measurement. The main purpose of this section is to show that using sequential strategies a simple projective measurement can determine the correct hypothesis with guaranteed bounded error requiring an expected number of copies significantly lower than the most general collective measurement acting on a fixed number of copies. In addition, we provide closed expressions for the optimal asymptotic performance.

Without loss of generality we characterize the two hypotheses by
\begin{equation}
\begin{split}
\rho &= r_{0} \ketbrad{\psi_0}+(1-r_{0})\id/2\\
\sigma&=r_{1} \ketbrad{\psi_1}+(1-r_{1})\id/2\,, 
\end{split}
\label{eq:mixedcase}
\end{equation}
where $\ket{\psi_i} = \cos\frac{\theta}{4}\ket{0}+(-1)^i \sin\frac{\theta}{4}\ket{1}$, $0\leq\theta\leq \pi$, $0\leq r_{i}\leq 1$ and  the (fixed) local measurement as
 $E_{0}=\ketbrad{\phi}$ and $E_{1}=\id-E_{0}$, with  
 $\ket{\phi}=\cos{\frac{\phi}{2}}\ket{0}+ \sin{\frac{\phi}{2}}\ket{1}$, and $0\leq\phi\leq \pi$. With these parametrizations, the probabilities of obtaining outcome $i=0,1$ are 
  $p_{\phi}(i)=P(i|H_0)=\frac{1}{2}[1+(-1)^{i}\cos({\theta/2-\phi})]$  and  $ q_{\phi}(i)=P(i|H_1)=\frac{1}{2}[1+(-1)^{i}\cos({\theta/2+\phi})]$, depending on which hypothesis is true.
 For simplicity we take equal priors $\eta_{0}=\eta_{1}=1/2$ and study the Bayesian mean number of copies under the same strong error bounds $\epsilon_{0}=\epsilon_{1}=\epsilon\ll 1$.
In the main text we show that the optimal test for a given choice of measurement angle is given by Wald's SPRT strategy, which according to (11) in MT leads to
\begin{equation}
\mean{N}=\eta_{0} \mean{N}_{0} +\eta_{1} \mean{N}_{1}\sim -\frac{1}{2}\log \epsilon  \left(\frac{1}{D(p_{\phi}\| q_{\phi})}
+\frac{1}{D(q_{\phi}\| p_{\phi})} \right) \,.
\label{Nlocal}
\end{equation}
\begin{figure}[htbp]
\begin{center}
\includegraphics[scale=.3]{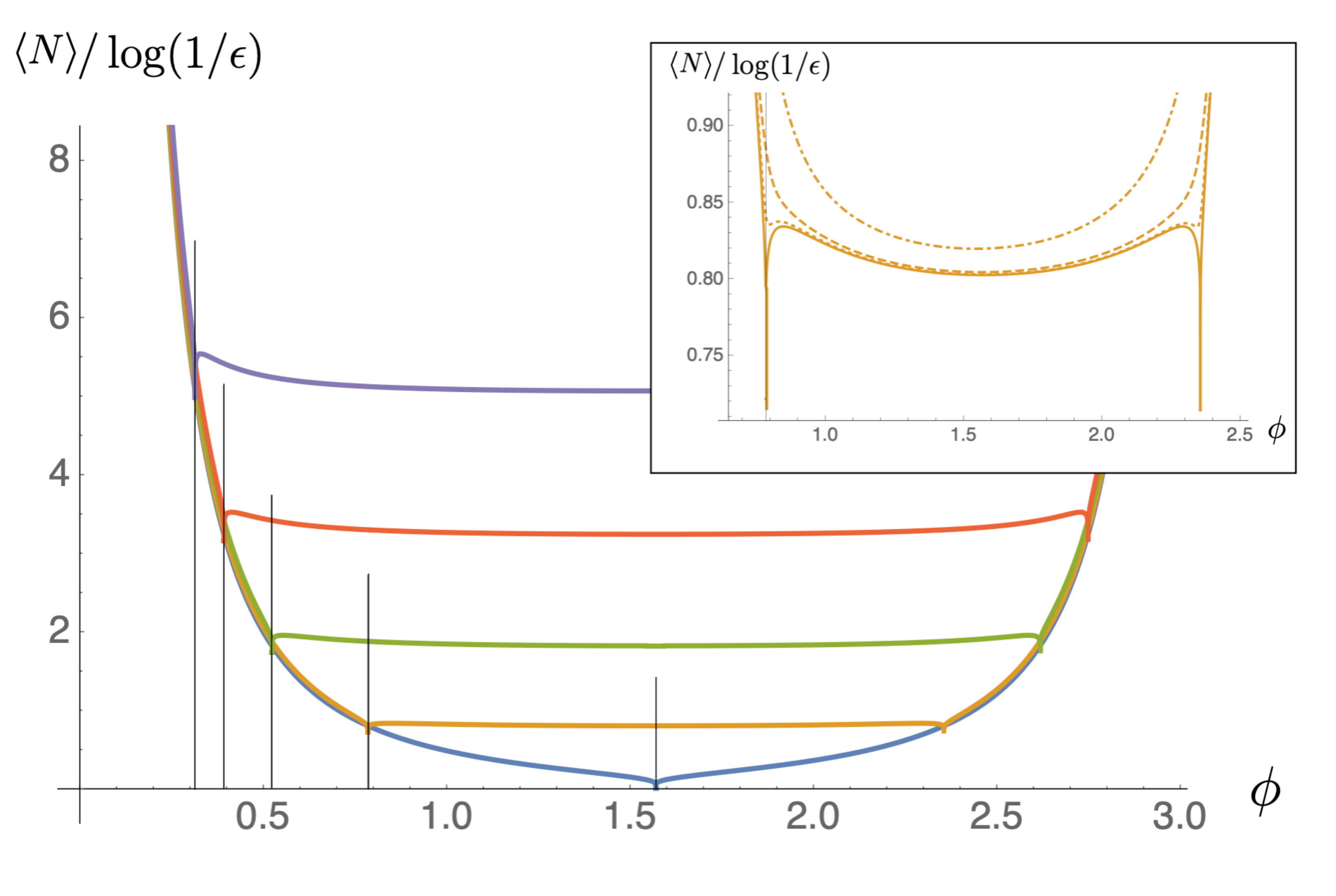}
\caption{Bayesian mean number of copies $\mean{N}$ as a function of the measurement angle $\phi$ for different pairs of pure states: from bottom to top 
$\theta=\{\pi,\frac{\pi}{2},\frac{\pi}{3},\frac{\pi}{4},\frac{\pi}{5}\}$.
 The vertical lines show the corresponding optimal measurement angles $\phi=\theta/2$. The inset shows in more detail the case with $\theta=\frac{\pi}{8}$, including the curves for noisy states with $1-r=\{10^{-2},10^{-3},10^{-4}\}$ (dashed lines, from top to bottom).}
\label{fig:OptimAngle}
\end{center}
\end{figure}

In Figure~\ref{fig:OptimAngle} we show the Bayesian mean number of copies required to have a guaranteed, asymptotically small bounded error $\epsilon$ for all outcomes of the experiment.
 For pure states ($r=1$), we observe that the optimal angle is a 
  singular point located at
   $\phi=\theta/2$, that corresponds to the fully biased measurement for which outcome 1 can only occur under hypothesis $H_{1}: p(1|H_{0})=0$ while  $p(1|H_{1})=\cos^{2}(\theta/2)>0$. Hence,  $H_{1}$  is detected with certainty after a small number of steps $\mean{N}_{1}\sim \cos^{-2}(\theta/2)$ (independent of the error bound $\epsilon$), and therefore the leading contribution to the expected number of copies when hypothesis $H_{0}$ is true is
\be
\label{eq:Npure}
\mean{N}_\text{local}\sim\frac{\log\epsilon}{2 \log(\cos^{2}\!\frac{\theta}{2})}\,.
\ee
Note that this is exactly half of  the number of copies that the most general collective deterministic strategy would require to attain this error bound, since $\epsilon=\frac{1}{2}(1-\sqrt{1-\cos^{2N}(\theta/2)})\sim \cos^{2N}(\theta/2)$. 
This error bound can be attained with local adaptive measurements for finite $N$ \cite{acin_multiple-copy_2005} and fixed local measurements for asymptotically large $N$.
 The result in \eqref{eq:Npure} is in agreement with that derived in \cite{Slussarenko2017MinASN}
for the fully biased strategy,  which we have shown to be optimal in the limit of small error bounds (among fixed local measurement strategies). In  Figure \ref{fig:OptimAngle} we also note that a small change around  
 the optimal value $\phi=\theta/2$ produces a very rapid increase of the effective number of copies while  the local minimum at $\phi=\pi/2$, which corresponds to the fully unbiased measurement, is much more shallow and hence more robust to a possible measurement misalignment.

We now proceed to study what happens in the presence of noise, when both states are mixed, in particular when $r_{0}=r_{1}=r$.
 As shown in the inset of Figure \ref{fig:OptimAngle}, the presence of noise makes the two states more indistinguishable and a higher number of samples are required to meet the error bound. It is also apparent that in presence of noise the fully unbiased measurement, $\phi=\pi/2$,   becomes optimal (except for extremely high values of the purity $1-r\sim10^{-5}$ for which fully biased performs slightly better). The unbiased measurement is straightforward to compute:
\be
\label{eq:Nunbiased}
\mean{N}_{\mathrm{local}}\sim\frac{\log\epsilon}{ r\sin\theta \log\left(\frac{1-r\sin\frac{\theta}{2}}{1+r\sin\frac{\theta}{2}}\right)}\,.
\ee

We can again compare  $\mean{N}$ reached by the local measurements \eqref{eq:Nunbiased}with the sample size, $N$, required by the optimal  deterministic protocol using a predetermined number of copies to achieve the same error $\epsilon$. When $N$ is large, i.e. 
$\epsilon$ is small, this can be obtained from the asymptotic error exponent in the quantum Chernoff 
bound~\cite{calsamiglia_quantum_2008,audenaert_discriminating_2007}. We find
\begin{equation}
    \label{eq:Ncher}
    N_{\rm Ch}\sim\frac{\log\epsilon}{ \log\left(1-(1-\sqrt{1-r^{2}})\sin^{2}\frac{\theta}{2}\right)}\,.
\end{equation}
In Figure 2 in MT we compare Eqs.~\eqref{eq:Nunbiased} and \eqref{eq:Ncher} and observe a reduction of the required number of copies of at least 50\% on average if we employ the sequential test instead of the deterministic one. The reduction 
goes up to 75\% if $\rho$ and $\sigma$ are very mixed.

\begin{figure}[htbp]
\subfloat[]{\includegraphics[width=.5\textwidth]{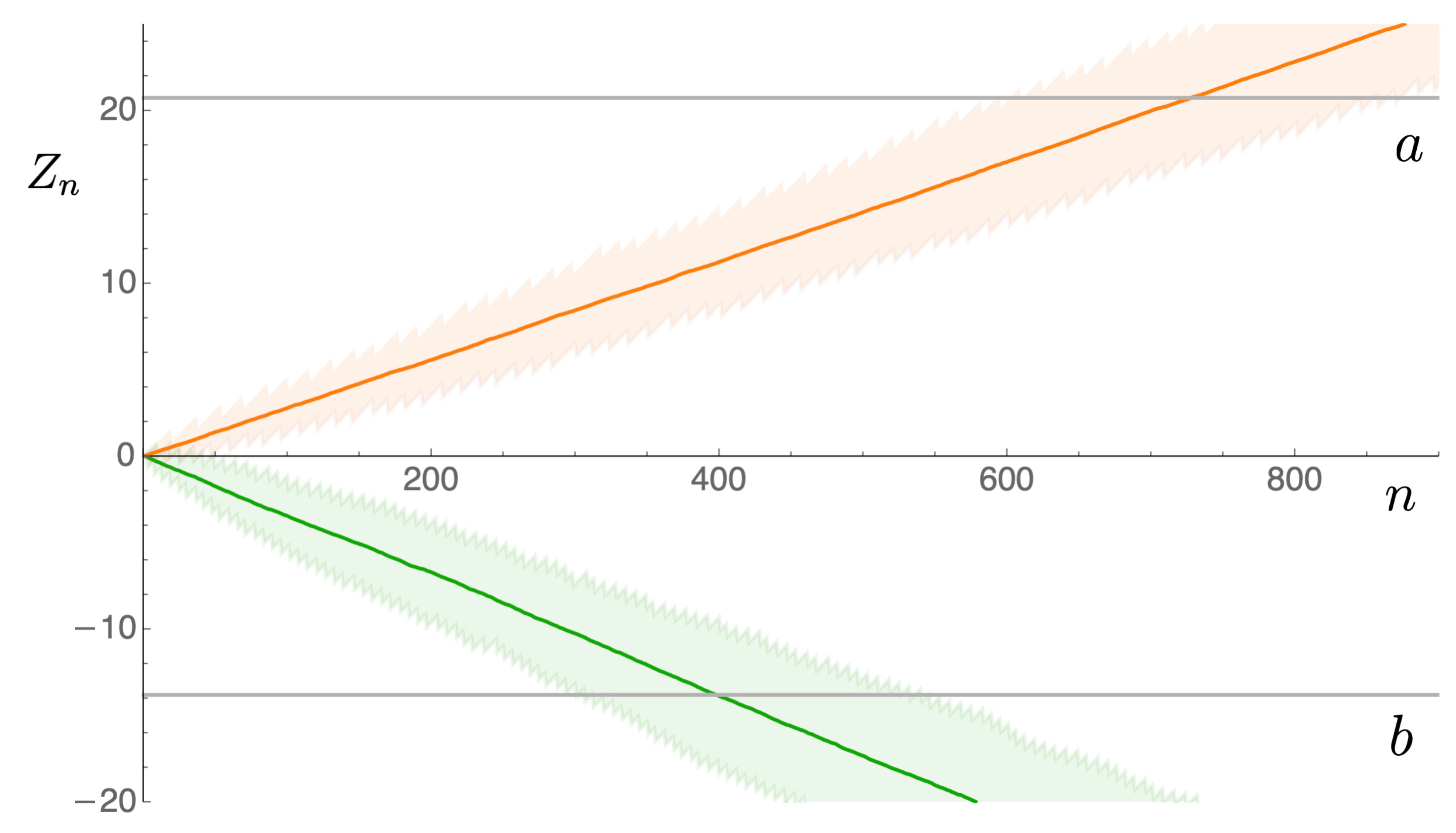}}
\subfloat[]{\includegraphics[width=.5\textwidth]{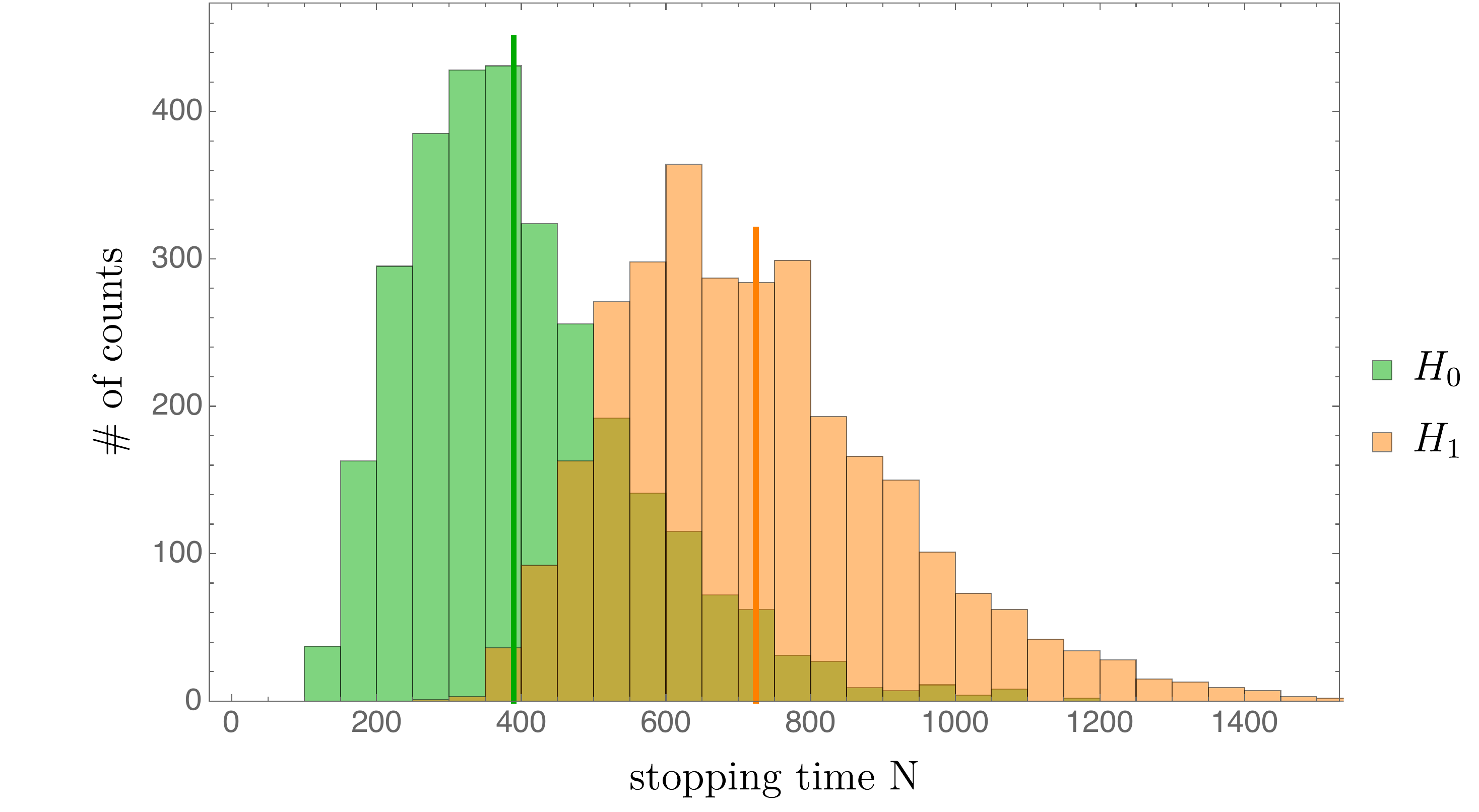}}\\
\subfloat[]{\includegraphics[width=.45\textwidth]{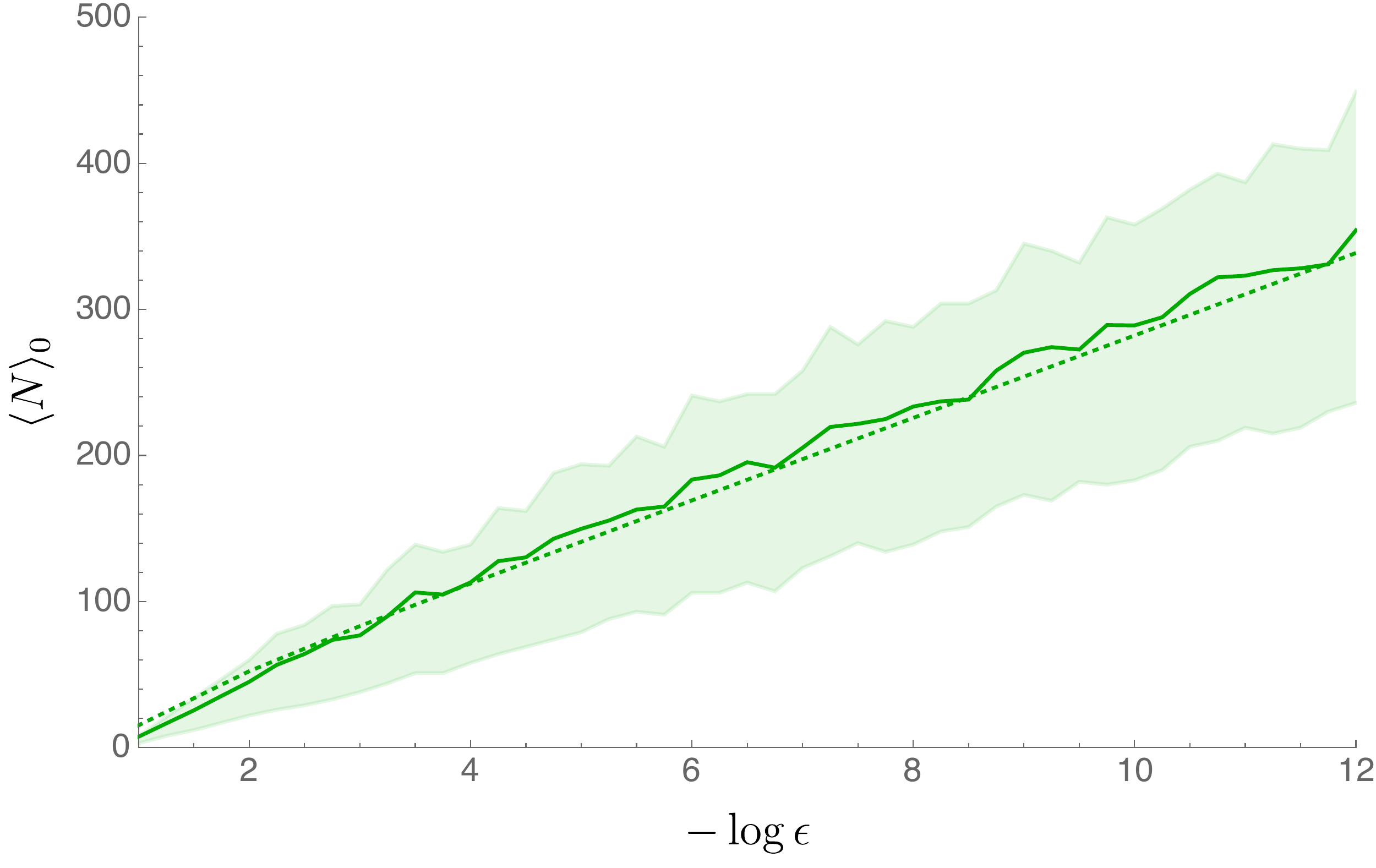}}
\caption{
(a) Behaviour of the likelihood ratio $Z_n$ using unbiased local measurements as a function of the step $n$, for 3000 realizations of the hypothesis test with $\theta={\pi\over 10}$, $r_0=0.7$, $r_1=0.9$, $\epsilon_0=10^{-6}$, $\epsilon_1=10^{-9}$, under hypotheses $H_0$ (green) and $H_1$ (orange). The solid lines show the mean trajectory, and the shaded areas correspond to the 2$^\text{nd}$ and 3$^\text{rd}$ quartiles (25\% above and below the mean). The decision boundaries $a$ and $b$ are also plotted.
(b) Histogram of stopping times under each hypothesis. The vertical lines mark the mean of each distribution.
(c) The mean number of copies $\mean{N}_0$ as a function of $\log\epsilon$, for the same state parameters as in (a) and equal error bounds $\epsilon_0=\epsilon_1=\epsilon$. The solid line represents the mean trajectory of 500 runs, the shaded area shows the 2$^\text{nd}$ and 3$^\text{rd}$ quartiles, and the dashed line is the analytic expression $\mean{Z}_0/D(p_\phi\|q_\phi)$, where recall that $\mean{Z}_0\sim -\log\epsilon$ for $\epsilon\ll 1$.
}
\label{fig:histo}
\end{figure}


For illustration purposes, in Figure~\ref{fig:histo} we show explicitly the results of several runs of a SPRT using unbiased local measurements. We observe how the mean trajectories that the cummulative log-likelihood ratio $Z_n$ follows point upwards or downwards depending on the underlying hypothesis. In this simulation, the state $\rho$, corresponding to $H_0$, is identified quicker than $\sigma$ (the decision boundary $b$ is closer than $a$), despite being more mixed. A histogram of stopping times under each hypothesis shows us that the distributions of $N$ are well-centered around their empirical mean, with right tails that are slightly longer; this is also apparent on the left figure from the cross-sections of the trajectories with the decision boundaries. Finally, we observe that the mean number of copies increases linearly with $\log\epsilon$ for $\epsilon\ll 1$, as predicted.

\subsection{Block-sampling and irrep projection}
Here we study the mean number of copies under both hypotheses using a block-sampling strategy where the same collective measurement is repeated on batches of $\ell$ copies.
In particular we will consider a collective measurement for which Hayashi  \cite{Hayashi_2001} 
showed that the (classical) relative entropy of the distributions that arise from it, attains the quantum relative entropy when the block length $\ell$ is large. 
Denoting by $M=\{M_{k}\}$ such collective POVM and by $\P^{\ell}_M, \Q^{\ell}_M$ the probability distributions of the outcomes, i.e., $\{\P^{\ell}_M(k)=\tr ( \rho^{\otimes \ell}M_{k})\}$ and $\{\Q^{\ell}_M(k)=\tr ( \sigma^{\otimes \ell}M_{k})\}$, in Ref.~\cite{Hayashi_2001}  it is shown that
\be
\frac{D\left(\P^{\ell}_M\| \Q^{\ell}_M\right)}{\ell} \leq D(\rho \| \sigma) \leq \frac{D\left(\P^{\ell}_M\| \Q^{\ell}_M\right)}{\ell} + (d-1)\frac{\log (\ell+1)}{\ell}\,,
\label{eq:MH}
\ee
where $d$ is the dimension of the underlying Hilbert space, from where
\be
\frac{D\left(\P^{\ell}_M\| \Q^{\ell}_M\right)}{\ell} \rightarrow D(\rho \| \sigma) \quad \mbox{ as } \ell \rightarrow \infty \,.
\label{eq:Dlim}
\ee
Quite remarkably, 
the measurement $M$ in Eq.~\eqref{eq:MH} depends solely on state $\sigma$.

As explained in the MT such a strategy allows one to attain the lower bound 
for one of the hypotheses, say $\mean{N}_0 \sim -\log\epsilon_0(1-A^{-1})/D(\rho\|\sigma)$ for the strong error bounds [cf. Eq.~(15) in MT], or
\be
\mean{N}_{0}\sim\frac{\log\beta(1-\alpha)}{D(\rho\| \sigma)}
\ee
for the asymmetric setting.

In what follows we compute the (sub-optimal) performance of this very same measurement under the other hypothesis, i.e., $\mean{N}_{1}$.

For qubit systems, the POVM that achieves the quantum relative entropy \cite{Hayashi_2001} corresponds to the simultaneous measurement of the total angular momentum $J^{2}$ (eigenspaces labeled by $j$) and its component along the axis $J_{z}$ (eigenspaces labeled by $m$), where $\hat{z}$ is picked to be the axis along which the state $\sigma$ points, i.e., $\sigma=1/2(\id+r_{1} \sigma_{z})$. The quantum number $j$ labels  the $SU(2)$ irreducible representations (irreps), and since it is invariant under the action of any rigid rotation $U^{\otimes n}$ it will only provide information about the spectra of $\rho$ or $\sigma$ ---which we denote $\lambda_{i}^{\pm}=\frac{1}{2}(1\pm r_{i})$ for $i=0,1$ respectively. 
The second measurement $J_{z}$ is clearly not $SU(2)$ invariant and provides information about relative angle between both hypotheses, and additional information on their spectrum.

Due to the permutational invariance of the $\ell$ copies it is possible to write the states in a block-diagonal form in terms of the $\{j,m\}$ quantum numbers (see e.g. \cite{gendra_beating_2012}):
\be
\sigma^{\otimes \ell}=\sum_{j=j_{\min}}^{\ell/2} q_{j} \sigma_{j}\otimes \frac{\id_{{j} }}{\nu_{j}} \quad \text{with}\quad
 \sigma_{j}=\sum_{m=-j}^{j} q(m|j)\ketbrad{j,m}  \mathrm{,} 
\ee
where  $\id_{{j}}$ are projectors over the subspaces of dimension 
$\nu_{j}={\ell \choose \ell/2-j}{2j+1\over \ell/2+j+1}$ that host the irreps of the permutation group (i.e. multiplicity space of spin $j$), $j_{\min}=0$ for $\ell$ even  ($j_{\min}=1/2$ for $\ell$  odd)  and 
\begin{eqnarray}
 \label{eq:qj}
q_{j}&=&\left(\frac{1-r_{1}^{2}}{4}\right)^{\ell/2} \nu_{j} Z_{j} \,,\\
 q(m|j)&=&\frac{1}{Z_{j}}R_{1}^{m}\quad \mathrm{with} \quad Z_{j}=\frac{R_{1}^{j}-R_{1}^{-j}}{R_{1}-1} 
 \quad \mbox{and} \quad R_{1}=\frac{1+r_{1}}{1-r_{1}}>1
 \label{eq:qmj}
\end{eqnarray}
are normalized probability distributions.

Under hypothesis $H_{0}$ the state has exactly the same  structure  except for a global rotation around 
the $\hat z$ axis by an angle $\theta$,
\begin{equation}
\rho^{\otimes \ell}=\sum_{j=j_{\min}}^{\ell/2} p_{j} \rho_{j}\otimes \frac{\id_{{j}} }{\nu_{j}} \quad \text{with}\quad
 \rho_{j}=\sum_{m=-j}^{j} p(m|j)\;U_{\theta}\ketbrad{j,m} U^{\dagger}_{\theta} \mathrm{,} \label{eq:rhoiid}
\end{equation}
where $p_{j}$ and $p(m|j)$ take the form of  \eqref{eq:qj} and \eqref{eq:qmj} replacing $r_{1}$ and $R_{1}$ by $r_{0}$ and $R_{0}$.

The outcomes of the  $J^{2}$ and $J_{z}$ measurements lead to probability distributions
\begin{align}
p(j,m)&=p_j\,\tilde{p}(m|j) \quad \mathrm{ with }\quad \tilde p(m|j)=\sum_{m'=-j}^{j}p(m'|j) |\!\bra{j,m} U_{\theta}\ket{j,m'}\!|^{2} \,,\\
q(j,m)&=q_j\,q(m|j) \,,
\end{align}
whose relative entropy can be written as
\begin{align}
D(q^{\ell}\| p^{\ell})=&\sum_{j=j_{\rm min}}^{\ell/2}q_j \sum_{m=-j}^j q(m|j) \log\frac{q_j\,q(m|j)}{p(j,m)}\sim \log\frac{q_{j^{*}}\,q(j^{*}|j^{*})}{p(j^{*},j^{*})}
 \label{eq:dpq}
\end{align}
where  we have used the fact that for $\ell\gg 1$, $q_j$  
is strongly peaked at $j^{*}=r_{1}\ell /2$ 
and  $q(m|j^{*})$  
decays exponentially, and hence it is peaked at $m=j^{*}$. 
In addition we note that
\begin{align}
p(j,m=j)&=p_j\sum_{m'=-j}^j p(m'|j) |\!\bra{j,j}U_\theta\ket{j,m'}\!|^{2}\nonumber\\
&=
\bra{j,j}\left( p_j \sum_{m'=-j}^j p(m'|j) U_\theta \ketbra{j,m'}{j,m'} U_\theta^\dagger \right)\ket{j,j}
=\left(\frac{1-r_0^2}{4}\right)^{l/2-j}\nu_j \bra{j,j} \rho^{\otimes 2j }\ket{j,j}\nonumber\\
&=\left(\frac{1-r_0^2}{4}\right)^{l/2-j}\nu_j\bra{\uparrow}\rho\ket{\uparrow}^{2 j}=
\left(\frac{1-r_0^2}{4}\right)^{l/2-j}\nu_j\left(\frac{1+r_0\cos\theta}{2}\right)^{2j} \,,
\end{align}
where we used the general decomposition of \eqref{eq:rhoiid}, and $\ket{\uparrow}$ is short-hand notation for $\ket{j=1/2,m=1/2}$.
Inserting this expression in  \eqref{eq:dpq} and using the definitions in \eqref{eq:qj} and
 \eqref{eq:qmj} we finally arrive at 

\begin{align}
D_{M_{\sigma}}(\sigma\| \rho)&:= \lim_{\ell \to\infty} \frac{1}{\ell}D(q^{\ell}\| p^{\ell})
= \frac{1}{2}(1-r_{1}) \log \frac{1-r_{1}^{2}}{1-r_{0}^{2}}+r_{1} \log \frac{1+r_{1}}{1+r_{0} \cos\theta}\nonumber\\
&=D(\lambda_{1}\|\lambda_{0}) +r_{1} \log\frac{1 + r_{0} }{1 + r_{0} \cos\theta} \,,
\label{eq:DsrMs}
\end{align}
where the symbol $M_\sigma$ recalls that we have chosen $M$ as a measurement over the eigenbasis of $\sigma$, which maximizes the relative entropy $D(q^l\|p^l)$, and
\be
D(\lambda_{1}\|\lambda_{0})=\frac{1}{2}(1+r_{1}) \log \frac{1+r_{1}}{1+r_{0}}+
\frac{1}{2}(1-r_{1}) \log \frac{1-r_{1}}{1-r_{0}}
\ee
 is the relative entropy between the spectra of $\sigma$ and $\rho$. Hence, the second term in \eqref{eq:DsrMs}
can be associated to the distinguishability caused by the different orientation (non-commutativity) of the states.

On the other hand, from \eqref{eq:Dlim} it follows
\begin{align}
D_{M_{\sigma}}(\rho\| \sigma)&:= \lim_{\ell \to\infty} \frac{1}{\ell}D(p^{\ell}\| q^{\ell})=D(\rho\| \sigma) \nonumber\\
&=\frac{1}{2}(1+r_{0}) \log \frac{1+r_{0}}{1+r_{1}}+
\frac{1}{2}(1-r_{0}) \log \frac{1-r_{0}}{1-r_{1}}+
r_{0} \log \frac{1+r_{0}}{1-r_{1}} \sin^{2}\frac{\theta}{2} \nonumber\\
&=D(\lambda_{0}\|\lambda_{1})+r_{0} \log \frac{1+r_{0}}{1-r_{1}} \sin^{2}\frac{\theta}{2} \,.
\label{eq:Drs2}
\end{align}

From the above results we conclude that applying the measurement that reaches
the ultimate bound for one hypothesis
\be
\mean{N}_{0}=-\frac{\log\beta}{D(\rho\| \sigma)}
\label{eq:No0}
\ee
will result in a sub-optimal value 
\be
\mean{N}_{1}\sim-\frac{\log\alpha}{D_{M_{\sigma}}(\sigma\| \rho)}
\ee
for the other hypothesis, with $D_{M_{\sigma}}(\sigma\| \rho)$ given in \eqref{eq:DsrMs}.

We observe that, as expected, when the states commute we can reach the
 ultimate bound for both $\mean{N}_{0}, \mean{N}_{1}$. 
 We also note that, when $\rho$ is pure, one can also preserve  asymptotic optimality 
 for both means, since when $\lambda_{0}^{\pm}\in\{1,0\}$, $D_{M_{\sigma}}(\sigma\| \rho)$ diverges and the leading contribution in $\mean{N}_{1}$ vanishes, while $\mean{N_{0}}$ reaches the optimal  value.
 These results hold for the block-sampling strategy that uses blocks of large length $l\gg 1$, so one needs to 
 find other ways to compute the finite $O(1)$ contribution to $\mean{N}_{1}$, as we shall show next.
 
We have already shown [see (18) in MT] that when \emph{both} states are pure we can  detect both hypotheses with a finite mean number of copies. If only one of the states is pure, say $\rho$, it is easy to notice that the $J^{2}$ measurement alone guarantees a constant value for $\mean{N}_{1}$: $\rho^{\otimes n}$ lies in the fully symmetric space
(with $j=n/2$) and therefore any measurement outcome $j<n/2$ will unambiguously identify $\sigma$.
The above  block-sampling might have an important overhead when $\ell$ is large.
A way of reducing this overhead can be devised by leveraging the fact that the measurement of $J^{2}$ on $n$ copies commutes with the measurement of $J^{2}$ on $n+1$ for all $n\geq 1$: 
starting at $n=2$, we measure $J^{2}$ sequentially on all available copies until we get an outcome $j<n/2$, at which moment we stop and accept $H_{1}$.
Note that each step of this sequence uses the already measured copies, increasing the number of jointly-measured systems by one.
Since the probability of not detecting $\sigma$ at step $n$ 
(continue measuring) is given by
$T^{n}_{1}=P(j=n/2)=\left(\frac{1+r_1\cos\theta}{2}\right)^{n}=:q_{+}^{n}$,
we can write
\be
\mean{N}_{1}=\sum_{n=0}^{\infty}T^{n}_{1}=\sum_{n=0}^{\infty}q_{+}^{n}=
\frac{1}{1-q_{+}}=\frac{2}{1-r_1\cos\theta} \,.
\label{eq:aa}
\ee

Note, however, that measuring $J^{2}$ sequentially is on its own not enough to reach the optimal mean number of copies also under hypothesis $H_0$, $\mean{N}_{0}$. 
For this reason, after every batch of $\ell$ copies, $\ell\gg \mean{N}_{1}$ ($\ell=o(\log1/\epsilon)$),
we interrupt the sequence of $J^{2}$ measurements with a measurement of $J_{z}$ on the last batch of $\ell$ copies (and then continue again with the $J^{2}$ sequential measurement). The measurement statistics  obtained  by this procedure mimicks the block-sampling method described above and hence we are guaranteed to converge to \eqref{eq:No0}.

Alternatively, in order to attain  \eqref{eq:aa} one can directly measure the system in the basis that diagonalizes $\rho=\ketbrad{\lambda_{0}^{+}}$, so that an  $\ket{\lambda_{0}^{-}}$ outcome  unambiguously detects $\sigma$ with probability $q_{-}=1-q_{+}$. It is immediate to check that the sequential application of this measurement also leads to  \eqref{eq:aa}. Again after having measured a sufficiently large number of copies $\ell=o(\log1/\epsilon)$ one can adopt the block-sampling strategy in order to achieve the bound  \eqref{eq:No0}.

\subsection{Ultimate limit for $N_{\mathrm{wc}}$. Attainability regions}
In this section we study the achievability of the lower bound on the worst-case mean number of copies, 
\begin{equation}
    N_{\mathrm{wc}}:=\max\left\{\mean{N}_{0},\mean{N}_{1}\right\}\geq \max\left\{-\frac{\log\epsilon}{D(\rho\| \sigma)},-\frac{\log\epsilon}{D(\sigma\| \rho)}\right\},
    \label{eq:wcbound}
\end{equation}
where for simplicity we assume that both error bounds are equal, i.e., $\epsilon_{0}=\epsilon_{1}=\epsilon\ll 1$.

In the previous section we have shown that the block-sampling with a given POVM $M_{\sigma}$  can reach  the optimal value under $H_{0}$, but it does so at the expense of attaining a sub-optimal value  under $H_{1}$. 
Making use of the results of \eqref{eq:DsrMs} and \eqref{eq:Drs2} one can show  that there are 
pairs of states $\{\rho, \sigma\}\in \mathcal{R}$ for which either
\begin{equation}
\label{eq:condsat}
D(\rho\|\sigma)\leq D_{M_{\sigma}}(\sigma\| \rho) \leq D(\sigma\|\rho) \quad\text{or}\quad
D(\sigma\| \rho)\leq D_{M_{\rho}}(\rho\| \sigma) \leq D(\rho\| \sigma)
 \,.
 \end{equation}
When this happens we can assert that the bound in \eqref{eq:wcbound} is attainable, since the worst-case value is attained, i.e.,
\begin{equation}
    N_{\mathrm{wc}}\sim \max\left\{-\frac{\log\epsilon}{D(\rho\| \sigma)},-\frac{\log\epsilon}{D(\sigma\| \rho)}\right\}\,,
    \label{eq:wcbound2}
\end{equation}

Figure \ref{fig:tight_regions} shows  some representative regions where \eqref{eq:condsat} is fulfilled, and \eqref{eq:wcbound2} holds. We observe that for small relative angles $\theta$ almost all states  attain the ultimate bound, except for a region around the pairs of equal purity.  It is easy to check that
for states with $r_{0}=r_{1}$, $D(\rho\|\sigma)=D(\sigma\|\rho)$, and therefore \eqref{eq:condsat} cannot be satisfied, independently of the relative angle $\theta$.
 When $\theta=\pi/2$, i.e., when the pair of states exhibits more non-classicality, only pairs comprised by a highly pure and a highly mixed state can attain the bound.
\begin{figure}[ht]
    \includegraphics[scale=0.45]{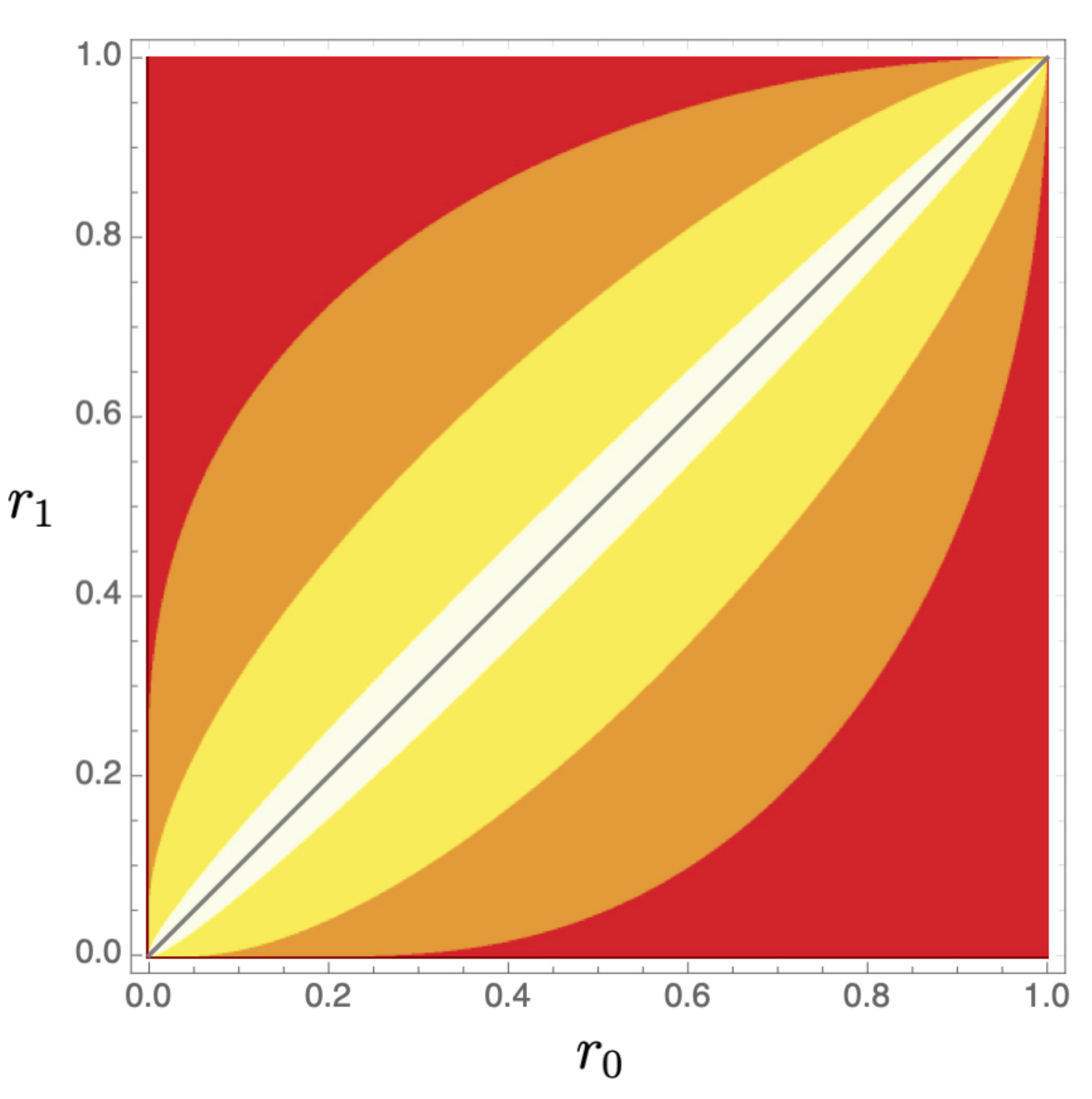}
    \caption{Regions $\mathcal{R}$ of purities $(r_{0},r_{1})$ for different relative angles $\theta$, where the ultimate bound for $N_{\mathrm{wc}}$ \eqref{eq:wcbound2} is achievable. The red region corresponds to $\theta=\pi/2$, orange and red to $\theta=\pi/10$,
 yellow, orange and red to $\theta=\pi/100$, and all admissible values of $(r_{0},r_{1})$ are achievable for commuting states ($\theta=0$).}.
\label{fig:tight_regions}
\end{figure}
\section{The overhead for arbitrary dimensions}
\label{sec:finitedim}

In this section we would like to explore how conditions \eqref{eq:condsat} look when states $\rho$ and $\sigma$ have arbitrary dimension $d>2$. In this case, exactly quantifying $D_{M_\rho}(\rho\|\sigma)$ [recall that we denote by $M_\rho$ the block-sampling measurement on $\ell$ copies that attains $D(\sigma\|\rho)$ when $\ell\to \infty$] is more involved.
Here instead we provide a general lower bound for 
the deviation of $D_{M_\rho}(\rho\|\sigma)$  from its maximum value $D(\rho\|\sigma)$. We follow closely Ref.~\cite{Hayashi_2001}. 
%

 First, consider the following operation on a state $\rho$ for a given a projective measurement $E=\{E_j\}$ (i.e., $E_j^2=E_j$ and $E_jE_k=\delta_{jk} E_j$),
 \begin{equation}
        \varepsilon_E(\rho) := \sum_j E_j\rho E_j \,.
    \end{equation}
When $E$ commutes with states $\rho$ and $\sigma$ we have 
 \be
 \varepsilon_E(\rho)=\rho \,,\quad \varepsilon_E(\sigma)=\sigma\,.
 \ee
 %
 %
 
Then, consider a projective measurement  $F(\rho)=\{F_k\}$ that consists of rank-one projectors in the eigenbasis of $\rho$, i.e., a measurement of the spectrum of $\rho$.  Note that we have $\varepsilon_F(\rho)=\rho$,
but  $\varepsilon_F( \sigma) \neq \sigma$ for a generic state $\sigma$ that does not commute with $\rho$. Note also that $E$ (which commutes with $\rho$ and
 $\sigma$) is a coarse-grained measurement of $F(\rho)$; we can then say that $F(\rho)$ is stronger than $E$. In~\cite{Hayashi_2001} this fact is denoted by $F(\rho)\geq E$. We then have the following lemma:

\begin{lemma}\label{lemma2}
    Let $\rho$ and $\sigma$ be states and let 
    $F(\rho)\geq E$. 
    The quantum relative entropy between $\rho$ and $\sigma$ can be expressed as 
    \begin{equation}
        D(\rho\| \sigma) = D(\varepsilon_F(\rho)\| \varepsilon_F(\sigma)) 
        +\Tr\rho(\log\varepsilon_F(\sigma)-\log\sigma)\,.
    \end{equation}
\end{lemma}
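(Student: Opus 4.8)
The plan is to recognize that this identity is nothing more than the definition of the relative entropy combined with the single structural fact that the pinching map $\varepsilon_F$ leaves $\rho$ invariant. First I would record this invariance: by hypothesis $F(\rho)=\{F_k\}$ is a complete family of rank-one orthogonal projectors onto an eigenbasis $\{\ket{e_k}\}$ of $\rho$, so writing $\rho=\sum_k\lambda_k\ketbrad{e_k}$ and $F_k=\ketbrad{e_k}$ one has $\varepsilon_F(\rho)=\sum_k F_k\rho F_k=\sum_k\lambda_k\ketbrad{e_k}=\rho$. (Only this feature of $F(\rho)$ is used; the extra hypothesis $F(\rho)\geq E$ does not enter the conclusion of this particular lemma and is carried along only because $E$ appears in the subsequent estimate.)

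Given $\varepsilon_F(\rho)=\rho$, the rest is a two-line cancellation. Expanding the right-hand side with $D(\alpha\|\beta)=\Tr\alpha\log\alpha-\Tr\alpha\log\beta$,
\begin{align*}
D(\varepsilon_F(\rho)\|\varepsilon_F(\sigma))+\Tr\rho(\log\varepsilon_F(\sigma)-\log\sigma)
&=\Tr\rho\log\rho-\Tr\rho\log\varepsilon_F(\sigma)+\Tr\rho\log\varepsilon_F(\sigma)-\Tr\rho\log\sigma\\
&=\Tr\rho\log\rho-\Tr\rho\log\sigma=D(\rho\|\sigma),
\end{align*}
so the two $\Tr\rho\log\varepsilon_F(\sigma)$ contributions cancel and the identity follows purely by linearity of the trace.

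The only point that deserves a word of care — and the closest thing to an obstacle, though a very mild one — is that all the logarithms be well defined. If $\operatorname{supp}\rho\subseteq\operatorname{supp}\sigma$, one checks that $\bra{e_k}\varepsilon_F(\sigma)\ket{e_k}=\bra{e_k}\sigma\ket{e_k}$ is strictly positive whenever $\ket{e_k}\in\operatorname{supp}\sigma$, and that each eigenvector of $\rho$ with $\lambda_k>0$ lies in $\operatorname{supp}\rho\subseteq\operatorname{supp}\sigma$; hence $\operatorname{supp}\rho\subseteq\operatorname{supp}\varepsilon_F(\sigma)$ and every term in the display is finite, so the identity holds as stated. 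If instead $\operatorname{supp}\rho\not\subseteq\operatorname{supp}\sigma$, both sides equal $+\infty$ under the usual conventions. There is thus no genuine difficulty: the entire content of the lemma is the invariance $\varepsilon_F(\rho)=\rho$, and everything else is bookkeeping.
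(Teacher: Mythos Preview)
Your proof is correct and follows essentially the same approach as the paper: both rely solely on the invariance $\varepsilon_F(\rho)=\rho$ and then expand the definition of relative entropy so that the $\Tr\rho\log\varepsilon_F(\sigma)$ terms cancel. Your additional remark on supports is a nice point of rigor that the paper omits, but the core argument is identical.
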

\begin{proof}
 Recalling that $\varepsilon_F(\rho)=\rho$,  we have $\Tr \varepsilon_F(\rho)\log\varepsilon_F(\sigma) = \Tr\rho\log\varepsilon_F(\sigma)$, thus
    \begin{align}
        D(\varepsilon_F(\rho)\| \varepsilon_F(\sigma))-D(\rho\| \sigma) 
= \Tr\varepsilon_F(\rho)(\log\varepsilon_F(\rho)-\log\varepsilon_F(\sigma))
- \Tr\rho(\log\rho-\log\sigma)
=\Tr\rho(\log\sigma-\log\varepsilon_F(\sigma))\,.
    \end{align}
Hence, it follows that
    \begin{equation}
         D(\varepsilon_F(\rho)\| \varepsilon_F(\sigma)) =D(\rho\| \sigma) -
        \Tr\rho(\log\varepsilon_F(\sigma)-\log\sigma)\,.
    \end{equation}
\end{proof}

%
We also need the following lemma:
\begin{lemma}\label{lemma3}
    For a given projective measurement $E$ such that $E\leq F$, if $E$  commutes with $\sigma$ and $\rho$ we have that
    \begin{equation}\label{eq:lemma3}
        \Tr\rho(\log\varepsilon_F(\sigma)-\log\sigma)\leq \sup_i\{\Tr\rho_i(\log\varepsilon_F(\sigma_i)-\log\sigma_i)\} \,,
    \end{equation}
    where we define $\rho_i:=\frac{1}{a_i} E_i\rho E_i$, $\sigma_i:=\frac{1}{b_i} E_i\sigma E_i$, $a_i:=\Tr E_i\rho E_i$, and $b_i:=\Tr E_i\sigma E_i$.
\end{lemma}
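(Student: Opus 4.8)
The plan is to exploit the block-diagonal structure that $E$ forces on all the operators involved. Because $E$ is a projective measurement that commutes with both $\rho$ and $\sigma$, we have $E_i\rho E_i=E_i\rho$ and hence $\rho=\sum_i E_i\rho E_i=\sum_i a_i\rho_i$, and likewise $\sigma=\sum_i b_i\sigma_i$, where $\rho_i$ and $\sigma_i$ are density operators supported in $\operatorname{range}E_i$ and $\sum_i a_i=\sum_i b_i=1$. The assumption $E\le F$ says that $F$ refines $E$: every outcome $F_k$ satisfies $F_k\le E_{i(k)}$ for a unique index $i(k)$, i.e.\ $E_i=\sum_{k\in S_i}F_k$ for a partition $\{S_i\}$ of the outcomes of $F$.

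First I would verify that the pinching $\varepsilon_F$ respects this decomposition. Since $F_k\sigma_i F_k=0$ whenever $F_k\le E_j$ with $j\neq i$ (then $F_kE_i=F_kE_jE_i=0$, while $\sigma_i=E_i\sigma_iE_i$), linearity gives $\varepsilon_F(\sigma)=\sum_i b_i\,\varepsilon_F(\sigma_i)$, with each $\varepsilon_F(\sigma_i)=\sum_{k\in S_i}F_k\sigma_iF_k$ again supported in $\operatorname{range}E_i$. Thus $\varepsilon_F(\sigma)$ is block-diagonal with respect to $\{E_i\}$, its $i$-th block being $b_i\varepsilon_F(\sigma_i)$, exactly as $\sigma$ has $i$-th block $b_i\sigma_i$.

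Next I would compute the left-hand side block by block. As $\rho=\sum_j a_j\rho_j$ with $\rho_j$ supported in $\operatorname{range}E_j$, orthogonality of the $E_i$ kills all cross terms, so
\begin{equation}
\Tr\rho\big(\log\varepsilon_F(\sigma)-\log\sigma\big)=\sum_i a_i\,\Tr\rho_i\Big(\log\big(b_i\varepsilon_F(\sigma_i)\big)-\log\big(b_i\sigma_i\big)\Big)=\sum_i a_i\,\Tr\rho_i\big(\log\varepsilon_F(\sigma_i)-\log\sigma_i\big),
\end{equation}
where in the last step the common scalar $\log b_i$ cancels after pairing with $\rho_i$. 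The right-hand side is a weighted average, with weights $a_i\ge0$ summing to one, of the numbers $\Tr\rho_i(\log\varepsilon_F(\sigma_i)-\log\sigma_i)$, and is therefore bounded above by $\sup_i\{\Tr\rho_i(\log\varepsilon_F(\sigma_i)-\log\sigma_i)\}$, which is precisely~\eqref{eq:lemma3}.

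The main thing to handle with care is not the algebra but the bookkeeping of supports, since $\varepsilon_F(\sigma_i)$ may be rank-deficient inside the block $E_i$. Following the conventions already adopted around Lemma~\ref{lemma2} (namely $\operatorname{supp}\rho\subseteq\operatorname{supp}\sigma$, so that $\operatorname{supp}\rho_i\subseteq\operatorname{supp}\sigma_i\subseteq\operatorname{supp}\varepsilon_F(\sigma_i)$, the last inclusion because a pinching never shrinks support), the logarithms $\log\varepsilon_F(\sigma_i)$ and $\log\sigma_i$ and the per-block quantities are well defined, and the cancellation of the $\log b_i$ terms in the displayed equation is legitimate because $\Tr\rho_i\,\Pi_{\varepsilon_F(\sigma_i)}=\Tr\rho_i\,\Pi_{\sigma_i}=1$ for the support projectors $\Pi$. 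With that dispatched, the lemma is an immediate consequence of the block-diagonal identity together with the elementary fact that a weighted average is bounded above by its largest term.
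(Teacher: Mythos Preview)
Your proof is correct and follows essentially the same route as the paper's: both arguments use the block-diagonal structure induced by $E$ to rewrite the left-hand side as the convex combination $\sum_i a_i\,\Tr\rho_i(\log\varepsilon_F(\sigma_i)-\log\sigma_i)$ and then bound the weighted average by its supremum. Your version simply makes explicit what the paper leaves implicit---namely the cancellation of the $\log b_i$ terms and the support bookkeeping---so the two proofs are the same in substance.
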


\begin{proof}

Starting from the left side of inequality~\eqref{eq:lemma3}, the following steps hold:
    \begin{align}
        \Tr\rho(\log\varepsilon_F(\sigma)-\log\sigma) &= \Tr[\sum_iE_i\rho(\log\varepsilon_F(\sigma)-\log\sigma)]
       = \Tr[\sum_iE_i\rho E_i(E_i\log\varepsilon_F(\sigma)E_i-E_i\log\sigma E_i)]\\
        &= \Tr[\sum_i a_i\rho_i(\log\varepsilon_F(\sigma_i)-\log\sigma_i)]
        \leq \sup_i\{\Tr\rho_i(\log\varepsilon_F(\sigma_i)-\log\sigma_i)\}
        =: \omega(\sigma).
    \end{align}
\end{proof}

We are now ready to derive a bound on $D_{M_\rho}(\rho\|\sigma)$ with the following theorem:
\begin{theorem}
    Let us define the projective measurement $M_\rho=F(\rho^{\otimes \ell})\times E^\ell$ acting on $\ell$ copies, where $E^\ell$, applied first, is a measurement that projects onto the irreps of $SU(d)^{\otimes \ell}$.
    Then, $F(\rho^{\otimes \ell})$ is a spectral measurement of $\rho^{\otimes \ell}$, i.e., a projective measurement on the basis that diagonalizes $\rho$.
For this measurment, we have
\begin{equation}
    D(\rho\| \sigma)-\frac{\omega(\sigma)}{\ell}\leq\frac{1}{\ell}
    D_{M_\rho}(\rho^{\otimes \ell}\| \sigma^{\otimes \ell})\leq D(\rho\| \sigma) \,.
\end{equation}
\end{theorem}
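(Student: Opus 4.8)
The two inequalities have independent origins: the right-hand one is a plain data-processing statement, whereas the left-hand one comes from feeding the $\ell$-copy states into Lemmas~\ref{lemma2} and~\ref{lemma3}. I would first dispose of the upper bound. By definition $D_{M_\rho}(\rho^{\otimes \ell}\|\sigma^{\otimes \ell})$ is the classical relative entropy between the outcome distributions of the projective measurement $M_\rho$; since performing a measurement is a CPTP map, monotonicity of the quantum relative entropy gives $D_{M_\rho}(\rho^{\otimes \ell}\|\sigma^{\otimes \ell})\leq D(\rho^{\otimes \ell}\|\sigma^{\otimes \ell})=\ell\,D(\rho\|\sigma)$, i.e.\ the stated upper bound after dividing by $\ell$.

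For the lower bound the plan has three steps. First, identify $D_{M_\rho}(\rho^{\otimes \ell}\|\sigma^{\otimes \ell})$ with the \emph{dephased} relative entropy $D(\varepsilon_F(\rho^{\otimes \ell})\|\varepsilon_F(\sigma^{\otimes \ell}))$, where $F\equiv F(\rho^{\otimes \ell})$ is the rank-one spectral measurement of $\rho^{\otimes \ell}$. This rests on two observations: (a) by Schur--Weyl duality $\rho^{\otimes \ell}$ and $\sigma^{\otimes \ell}$ are block-diagonal across the $SU(d)$ irreps, so the irrep projection $E^\ell$ commutes with both and $F(\rho^{\otimes \ell})$ can be chosen to diagonalise $\rho^{\otimes \ell}$ inside each block, hence $F(\rho^{\otimes \ell})\geq E^\ell$; (b) because a finer projective measurement absorbs a coarser one, the outcome statistics of $M_\rho=F(\rho^{\otimes \ell})\times E^\ell$ coincide with those of $F(\rho^{\otimes \ell})$ alone, and for a rank-one projective measurement the classical relative entropy of the outcomes equals the relative entropy of the states dephased in that basis; equivalently $\varepsilon_{M_\rho}=\varepsilon_F$. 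Second, apply Lemma~\ref{lemma2} to the pair $(\rho^{\otimes \ell},\sigma^{\otimes \ell})$ with this $F$; using $\varepsilon_F(\rho^{\otimes \ell})=\rho^{\otimes \ell}$ it yields the exact identity
\begin{equation}
\ell\,D(\rho\|\sigma)=D_{M_\rho}(\rho^{\otimes \ell}\|\sigma^{\otimes \ell})+\Tr\rho^{\otimes \ell}\bigl(\log\varepsilon_F(\sigma^{\otimes \ell})-\log\sigma^{\otimes \ell}\bigr).
\end{equation}
Third, bound the correction term with Lemma~\ref{lemma3} applied to $E=E^\ell$ (which commutes with $\rho^{\otimes \ell},\sigma^{\otimes \ell}$ and satisfies $E^\ell\leq F$): this gives $\Tr\rho^{\otimes \ell}(\log\varepsilon_F(\sigma^{\otimes \ell})-\log\sigma^{\otimes \ell})\leq\sup_i\{\Tr\rho_i(\log\varepsilon_F(\sigma_i)-\log\sigma_i)\}=:\omega(\sigma)$, the supremum being over the irrep blocks $i$ with $\rho_i,\sigma_i$ the normalised restrictions to block $i$. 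Substituting into the identity and dividing by $\ell$ yields $\tfrac1\ell D_{M_\rho}(\rho^{\otimes \ell}\|\sigma^{\otimes \ell})=D(\rho\|\sigma)-\tfrac1\ell[\cdots]\geq D(\rho\|\sigma)-\omega(\sigma)/\ell$. As a consistency check, the correction term equals $D(\rho^{\otimes \ell}\|\sigma^{\otimes \ell})-D_{M_\rho}(\rho^{\otimes \ell}\|\sigma^{\otimes \ell})\geq 0$ by data processing, which re-derives the upper bound as well.

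The only non-routine point is step one---making precise that $F(\rho^{\otimes \ell})$ may be taken as a refinement of the irrep projection $E^\ell$, so that $\varepsilon_{M_\rho}=\varepsilon_F$, and that a rank-one projective measurement's measured relative entropy is exactly the relative entropy of the dephased states. Everything after that is a direct substitution into Lemmas~\ref{lemma2} and~\ref{lemma3}, running in parallel with Hayashi's treatment in Ref.~\cite{Hayashi_2001}. For the subsequent use of this theorem in checking condition~\eqref{eq:condsat} one additionally wants $\omega(\sigma)/\ell\to 0$, which follows because the block restrictions $\rho_i,\sigma_i$ live on spaces of dimension polynomial in $\ell$; this, however, is not needed for the statement proved here.
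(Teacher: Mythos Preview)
Your proposal is correct and follows essentially the same approach as the paper: apply Lemmas~\ref{lemma2} and~\ref{lemma3} to the $\ell$-copy states $\rho^{\otimes\ell},\sigma^{\otimes\ell}$ together with the additivity $D(\rho^{\otimes\ell}\|\sigma^{\otimes\ell})=\ell\,D(\rho\|\sigma)$, and invoke monotonicity for the upper bound. You supply considerably more detail than the paper's terse proof---in particular the justification that $F(\rho^{\otimes\ell})\geq E^\ell$ via Schur--Weyl and that $D_{M_\rho}$ coincides with $D(\varepsilon_F(\cdot)\|\varepsilon_F(\cdot))$---but the argument is the same.
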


\begin{proof}
    We simply use Lemma~\ref{lemma2} and Lemma~\ref{lemma3}  with the change
    $\rho\rightarrow\rho^{\otimes \ell}$ and $\sigma\rightarrow\sigma^{\otimes \ell}$, and we recall the property of the quantum relative entropy $D(\rho^{\otimes \ell}\| \sigma^{\otimes \ell})= \ell D(\rho\| \sigma)$. The result follows from applying  Lemma~\ref{lemma2}  to all the terms in  Lemma~\ref{lemma3}. 
\end{proof}

A very generous bound  can be obtained by dropping the 
negative term $\log\varepsilon_F(\sigma_i)$ in $\omega(\sigma)$~\cite{audenaert_eisert_2005}:
\begin{align}
\omega(\sigma)\leq \sup_i-\Tr[\rho_i (\log\sigma_i)] \leq \max_i-\log [\lambda_{\min}(\sigma_i)]\,.
\end{align}

\section{Zero-error protocol for pure states}
\label{sec:unambiguous}

As we have seen in the MT, as the error $\epsilon$ goes to $0$, the average number
of copies goes to infinity. However, for
pure states $\{ \rho=\ketbrad{\psi_0}, \sigma=\ketbrad{\psi_1} \}$  there are sequential strategies with local measurements that give a strictly zero error with a finite 
average number of samples.  Here we detail the protocol already mentioned in the MT and prove its optimality for equal priors for the Bayesian mean and worst-case number of copies.

To this end, consider a sequence of fixed unambiguous measurements on each copy with  inconclusive probabilities $c_\nu$ if the given state is $\ket{\psi_\nu}$, $\nu=0,1$. We notice that these probabilities satisfy the  'uncertainty' relation $c_0 c_1\geq s^2$, where $s=|\braket{\psi_0}{\psi_1}|$ \cite{sentis2018online}.
The protocol stops only if one of the  states is identified with no error. Hence, at each step $n$ there are only two possibilities: continue, with conditional probability (after having arrived at step $n$) $c_\nu$, 
or stop, with conditional probability $1-c_\nu$. 
The probability of exactly stopping  at step $n$ is $P^n_\nu=c_\nu^{n-1}(1-c_\nu)$. 
Then, the  average number of copies required  to get a  zero-error outcome  is 
\begin{align}
\label{Mp}
\mean{N}_\nu=&  \sum_{n=1}^{\infty} n c_\nu^{n-1}(1-c_\nu)= \sum_{n=0}^{\infty} c_\nu^{n}=\frac{1}{1-c_\nu} \,.
\end{align}
Notice that both means are finite if one performs an  unambiguous measurement with $c_0<1$ and $c_1<1$, which is allowed by the relation $c_0 c_1\geq s^2$.

In the case of equal priors, we now show that the symmetric choice $c_0=c_1=s$ 
gives the optimal Bayesian mean $\mean{N}=(\mean{N}_0+\mean{N}_1)/2$. 
%
%
We observe that the inconclusive probability attained by the optimal global measurement on $n$ copies of $\ket{\psi_\nu}$, $T_\nu^{n}$, 
cannot be beaten by any local strategy, hence, using \eqref{Mp} we have
\begin{align}
\label{Mp-opt}
\mean{N}=\frac{1}{1-s}=\sum_{n=0}^{\infty} s^n\geq \sum_{n=0}^{\infty}  \frac{1}{2}\left(T_0^{n}+T_1^{n}\right) =:\mean{N^*}\,.
\end{align}
Analogously to the derivation of (15) in MT, the r.h.s. of \eqref{Mp-opt} corresponds to a relaxation of the original problem in which we have independently optimized each term in the sum, considering the action of optimal $n$-copy unambiguous measurements for each $n$, hence $\mean{N^*}$ is a lower bound to the most general protocol. 
Since $n$-copy pure states are simply pure states of larger dimension, we have $|\braket{\psi_0^{\otimes n}}{\psi_1^{\otimes n}}|=s^n$, and the equivalent relation 
$T_0^{n}T_1^{n}\geq s^{2n}$ 
holds for global strategies. 
%
Then, the symmetric choice $T_0^{n}=T_1^{n}=s^n$  
minimizes each summand in~\eqref{Mp-opt}, and we obtain $\mean{N^*} = \sum_{n=0}^\infty s^n = (1-s)^{-1} = \mean{N}$. 

Finally, we also note that the symmetric choice also optimizes the figure of merit given by the  worst-case number of copies $N_\mathrm{wc} =\max\{\mean{N}_0,\mean{N}_1  \}$.  Because of the relation 
$T_0^n T_1^n\geq s^{2n}$, if $T_0^n>s^n$ then $T_1^n<s^n$ and $N_\mathrm{wc}=\mean{N}_0>1/(1-s)$. 
The same argument applies if $T_1^n>s^n$,  
hence it follows  that  the optimal measurement has $T_0^n=T_1^n=s^n$ 
and  $N_\mathrm{wc}=1/(1-s)$.
%
%


\end{widetext}

\end{document}